\theoremstyle{plain}
\newtheorem{thm}{Theorem}[section]
\newtheorem{lem}[thm]{Lemma}
\newtheorem{prop}[thm]{Proposition}
\theoremstyle{definition}
\newcommand{\qunasys}{QunaSys Inc., Aqua Hakusan Building 9F, 1-13-7 Hakusan, Bunkyo, Tokyo 113-0001, Japan}
\newcommand{\fujitsu}{Quantum Laboratory, Fujitsu Research, Fujitsu Limited., 4-1-1 Kamikodanaka, Nakahara, Kawasaki, Kanagawa 211-8588, Japan}
\begin{document}

\preprint{APS/123-QED}

\title{Fast-forwardability of Jordan-Wigner-transformed Fermion models based on Cartan decomposition}

\author{Yuichiro Hidaka}
\email{hidaka@qunasys.com}
\affiliation{\qunasys}

\author{Shoichiro Tsutsui}
\affiliation{\qunasys}

\author{Shota Kanasugi}
\affiliation{\fujitsu}

\author{Norifumi Matsumoto}
\affiliation{\fujitsu}

\author{Kazunori Maruyama}
\affiliation{\fujitsu}

\author{Hirotaka Oshima}
\affiliation{\fujitsu}

\date{\today}

\begin{abstract}
We study the Hamiltonian algebra of Jordan-Wigner-transformed interacting fermion models and its fast-forwardability. We prove that the dimension of the Hamiltonian algebra of the fermion model with single-site Coulomb interaction is bounded from below by the exponential function of the number of sites, and the circuit depth of the Cartan-based fast-forwarding method for such model also exhibits the same scaling. We apply this proposition to the Anderson impurity model and the Hubbard model and show that the dimension of the Hamiltonian algebra of these models scales exponentially with the number of sites. These behaviors of the Hamiltonian algebras imply that the qubit models obtained by the Jordan-Wigner transformation of these fermion models cannot be efficiently simulated using the Cartan-based fast-forwarding method.
\end{abstract}

\maketitle

\section{\label{sec:Introduction}Introduction}
Recently, quantum computing has attracted significant attention from researchers due to its potential to dramatically reduce the computation time for certain problems compared to classical computers. Quantum computing is investigated for applications in various realms of research, ranging from quantum chemistry (see, e.g., Refs.~\cite{alan2005quantumchem,cao2019quantumchemreview,bauer2020quantumchemreview,mcardle2020quantumchemreview,motta2022emergingquantumchem}) to financial engineering (see, e.g., Refs.~\cite{herman2022surveyfinance,bouland2020prospectsfinance}). The application to solid-state physics~\cite{Feynman1999SimulatingPhysicswith} is not the exception. Many researchers in solid-state physics focus on calculating the ground state and its energy of the given crystal Hamiltonian. 

Quantum phase estimation (QPE)~\cite{nielsenchuang2010quantum} is one of the most powerful quantum algorithms to calculate the eigenenergies of a given Hamiltonian. Such calculation strongly depends on the method used to approximate the time evolution operator. Generally, since the terms of the Hamiltonian written in Pauli strings do not commute with each other, there are various approaches to approximating the time evolution operator such as Trotter decomposition~\cite{trotter1959product,suzuki1976generalized,lloyd1996universalquantumsimulator,sornborger1999higherordermethodtrotter,nielsenchuang2010quantum}, variational methods~\cite{Li2017variationalquantumsimulation,Yuan2019theoryofvariational,heya2023subspacevariational,endo2020vqsgeneralprocesses,benedetti2021hardwareefficientVQS,lin2021realimaginarytimeevolution,berthusen2022qdsimulationbyNISQ,mizuta2022LVQC,akahoshi2024compilation,kanasugi2023compilationgreen,kanasugi2024subspace}, qDRIFT~\cite{campbell2019qdrift}, and qubitization method~\cite{Low2019hamiltonian,babbush2018lineartcomplexity}. 

Furthermore, fast-forwarding methods~\cite{atia2017fast, chia2023impossibility,Gu2021fastforwarding,berry2007efficient,Lim_2022,dong2024multi} are also the efficient quantum simulation algorithms that allow the gate complexity to be sublinear in the simulation time $T$ and suppress the accumulation of errors. On the other hand, it is also known that such drastic circuit compression may not be applied to general models due to the no-fast-forwarding theorem~\cite{atia2017fast, chia2023impossibility}. Therefore, it is important to clarify the distinction between the models that are fast-forwardable and those that are not. For example, in studies of the variational fast-forwarding method~\cite{cirstoiu2020variational,gibbs2022long,commeau2020variational}, it has been empirically known that the asymptotic scaling of circuit depth for interacting fermion systems appears to scale exponentially with the number of sites. However, this observation lacks a rigorous mathematical proof. Hence, we focus on the Cartan-based fast-forwarding method~\cite{kokcu2022cartanfastforward} to explore the potential for circuit depth reduction.

In the Cartan-based fast-forwarding method, the time evolution operator is implemented based on the Cartan decomposition of matrices. Whether we can efficiently implement the time evolution operator by the Cartan-based fast-forwarding method depends on the dimension of the Lie algebra generated by the set of the Pauli strings in the Hamiltonian, which we call {\it Hamiltonian algebra}~\cite{d2021introduction,kokcu2022cartanfastforward} or also known as dynamical Lie algebra. 
If the dimension of the Hamiltonian algebra scales polynomially with the number of sites, the circuit depth for the Cartan-based fast-forwarding method exhibits the same scaling.

The Hamiltonian algebra for quantum spin systems in one dimension such as XY model and the Heisenberg model is well studied~\cite{kokcu2022cartanfastforward}.
More generally, the structure of the Hamiltonian algebra of one-dimensional translational-invariant two-local Hamiltonians is classified by the form of generators~\cite{wiersema2024classification}. However, we cannot apply these discussions to Hamiltonians of fermionic systems, where we transform the creation and annihilation operators into the qubit operators.  This is because the resulting qubit Hamiltonians break the translation symmetry of the qubit system or the two-locality. Moreover, these discussions are restricted to one-dimensional systems. 

In this paper, we investigate the dimension of the Hamiltonian algebra of qubit-mapped fermion systems on arbitrary lattice. First, we prove that the dimension of the Hamiltonian algebra of the general free fermion system exhibits a polynomial scaling (specifically, a quadratic scaling)  with the system size. Then, we consider the fermion model with single-site Coulomb interaction. We prove that the Hamiltonian algebra of this model scales exponentially with the system size, unlike the case of the free fermion system. Then, applying these results, we evaluate the dimension of the Hamiltonian algebra of the Anderson impurity model and the Hubbard model. 

This paper is organized as follows: Sec.~\ref{sec:Hamiltonian Algebra} introduces the necessary background of the Lie algebra and the definition of the Hamiltonian algebra. Sec.~\ref{sec:Main Result} and Sec.~\ref{sec:algebraicfastforwarding} present and prove our main theorems for the Hamiltonian algebra of fermion systems, discussing the circuit depth of the Cartan-based fast-forwarding method of the interacting fermion model. Then, we apply these propositions to the specific interacting fermion models in Sec.~\ref{sec:application}. The applicability of our results is further discussed in Sec.~\ref{sec:exceptions}. with concluding remarks and discussion in Sec.~\ref{sec:conclusionanddiscussion}.

\section{\label{sec:Hamiltonian Algebra}Hamiltonian Algebra}

We first review the fundamentals of Lie algebra and Hamiltonian algebra. A Lie algebra $\mathfrak{g}$ is defined as a vector space in which a Lie bracket $[\cdot,\cdot]:\mathfrak{g}\times\mathfrak{g}\to\mathfrak{g}$ is defined. In this paper, we restrict the Lie bracket to the commutator, $[A, B]=AB-BA$, where $A, B\in\mathfrak{g}$. A sub-algebra of the Lie algebra $\mathfrak{g}$ is defined as a subspace of $\mathfrak{g}$ which is closed under the commutator.

Given a set $A=\{g_1,g_2,\cdots,g_n\}\subset\mathfrak{g}$, the sub-algebra $\mathfrak{g}(A)$ generated by $A$ is the sub-algebra of $\mathfrak{g}$ satisfying the following two conditions: (i) all elements in $A$ belong to $\mathfrak{g}(A)$, (ii) $\mathfrak{g}(A)$ is the minimal sub-algebra among those of $\mathfrak{g}$ satisfying the condition (i). The algebra $\mathfrak{g}(A)$ is obtained by the following procedures. We begin with the vector space $S_0=\mathrm{Span}_{\mathbb{R}}(A)$, where $S_0$ is spanned by the elements of $A$ with real coefficients. Then we take the commutator of the elements in $S_0$ and add them to the basis to obtain the vector space $S_1$, which is spanned by the extended basis, $S_0\subset S_1$. By repeating this operation until the vector space becomes no longer larger, we obtain the generated Lie algebra $\mathfrak g(A)$. The Lie algebra generated by the set $A$ is the real vector space spanned by the elements in the form of $[g_{i_k},[g_{i_{k-1}},\cdots [g_{i_1},g_{i_0}]\cdots]]$, where $k$ is a non-negative integer.

We consider the non-trivial $n$-qubit Pauli strings, which are the elements of the set $\{I, X, Y, Z\}^{\otimes n}$ except for the identity matrix $I^{\otimes n}$. These Pauli strings multiplied by the imaginary unit $i$ form the Lie algebra $\mathfrak{su}(2^n)$ by the linear combination, \textit{i.e.},
\begin{align}
    \mathfrak{su}(2^n)=\mathrm{Span}_{\mathbb{R}}\left\{iP_j\middle|P_j:\text{Pauli strings}\right\}.
\end{align}
The Lie algebra $\mathfrak{su}(2^n)$ admits the structure of the inner product: $(A,B)=\mathrm{Tr}[A^{\dagger}B]/2^n$ ($A,B\in\mathfrak{su}(2^n)$). Pauli strings form the normalized and orthogonal basis of $\mathfrak{su}(2^n)$ with respect to the inner product. Another important property of the Pauli strings is that the commutator of two non-commuting Pauli strings yields another Pauli String: $[iP,\ iP']=\pm iP''$, where $P$, $P'$, and $P''$ are Pauli strings. For simplicity, we omit the coefficient $i$ in the following discussion and write the commutation relation as $[P,\ P']\propto P''$ since we do not need information on coefficients when discussing the dimension of the sub-algebra.

When a certain set of Pauli strings $\mathcal{S}$ is given, the Lie algebra $\mathfrak{g}(\mathcal{S})$ generated by $\mathcal{S}$ is defined as the smallest sub-algebra of $\mathfrak{su}(2^n)$ which includes $\mathcal{S}$. In this case, the generated Lie algebra is in the form $\mathfrak{g}(\mathcal{S})=\mathrm{Span}_{\mathbb{R}}(\mathcal{P})$, where $\mathcal{P}(\supset\mathcal{S})$ is a certain set of Pauli strings. Then the dimension of this Lie algebra is $|\mathcal{P}|$. The important properties of the generated Lie algebra are summarized in the following proposition:

\begin{prop}\label{prop:generatedliealg}
(1) If the sets of Pauli strings $\mathcal{S}_1$ and $\mathcal{S}_2$ satisfy $\mathcal{S}_1\subset\mathcal{S}_2$, the Lie algebras generated by these sets satisfy $\mathfrak{g}(\mathcal{S}_1)\subset\mathfrak{g}(\mathcal{S}_2)$, hence $\text{dim}(\mathfrak{g}(\mathcal{S}_1))\le\text{dim}(\mathfrak{g}(\mathcal{S}_2))$.\\
(2) For any set of Pauli strings $\mathcal{S}$, the Lie algebra $\mathfrak{g}(\mathcal{S})$ generated by the set $\mathcal{S}$ satisfies $|\mathcal{S}|\le\text{dim}(\mathfrak{g}(\mathcal{S}))$.
\end{prop}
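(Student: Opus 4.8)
The plan is to derive both parts directly from the defining conditions (i) and (ii) of the generated Lie algebra together with the orthonormality of Pauli strings, so that neither statement requires tracking the iterative span construction $S_0 \subset S_1 \subset \cdots$ in full.

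For part (1), I would argue purely from minimality. Since $\mathcal{S}_1 \subset \mathcal{S}_2$ and, by condition (i), $\mathcal{S}_2 \subset \mathfrak{g}(\mathcal{S}_2)$, the sub-algebra $\mathfrak{g}(\mathcal{S}_2)$ already contains $\mathcal{S}_1$. Hence $\mathfrak{g}(\mathcal{S}_2)$ is a sub-algebra of $\mathfrak{su}(2^n)$ satisfying condition (i) with respect to the generating set $\mathcal{S}_1$. By condition (ii), $\mathfrak{g}(\mathcal{S}_1)$ is the \emph{minimal} such sub-algebra, and therefore $\mathfrak{g}(\mathcal{S}_1) \subset \mathfrak{g}(\mathcal{S}_2)$. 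The dimension inequality is then immediate, since a subspace of a finite-dimensional vector space has dimension no larger than that of the containing space.

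For part (2), I would use linear independence. By condition (i) we have $\mathcal{S} \subset \mathfrak{g}(\mathcal{S})$. The elements of $\mathcal{S}$ are distinct non-identity Pauli strings, which are mutually orthogonal and normalized with respect to the inner product $(A,B) = \mathrm{Tr}[A^{\dagger}B]/2^n$ introduced above, and hence linearly independent over $\mathbb{R}$. A vector space containing $|\mathcal{S}|$ linearly independent elements has dimension at least $|\mathcal{S}|$, which yields $|\mathcal{S}| \le \dim(\mathfrak{g}(\mathcal{S}))$.

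Neither step poses a genuine obstacle; these are structural facts about generated algebras rather than computations. The only points requiring care are to invoke the minimality clause (ii) in part (1) instead of attempting to follow the explicit nested-span procedure, and to cite the orthonormality of Pauli strings as the precise source of the linear independence used in part (2).
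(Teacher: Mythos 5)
Your proof is correct, but it takes a somewhat different route from the paper's, most notably in part (1). The paper proves $\mathfrak{g}(\mathcal{S}_1)\subset\mathfrak{g}(\mathcal{S}_2)$ by invoking the explicit description of the generated algebra as the span of nested commutators $[P_{i_1},[P_{i_2},\cdots,[P_{i_{r-1}},P_{i_r}]\cdots]]$ of the generators, and then noting that closure of $\mathfrak{g}(\mathcal{S}_2)$ under the commutator forces all such elements to lie in $\mathfrak{g}(\mathcal{S}_2)$. You instead argue abstractly from the minimality clause (ii): since $\mathfrak{g}(\mathcal{S}_2)$ is a sub-algebra containing $\mathcal{S}_1$, minimality of $\mathfrak{g}(\mathcal{S}_1)$ gives the inclusion at once. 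Your argument is cleaner and more general (it works for generated sub-algebras of any Lie algebra, with no reference to Pauli strings), while the paper's version has the minor virtue of being self-contained given its explicit nested-commutator characterization, which it also uses elsewhere. For part (2), the two arguments are essentially the same counting statement: the paper writes $\mathfrak{g}(\mathcal{S})=\mathrm{Span}_{\mathbb{R}}(\mathcal{P})$ with $\mathcal{P}\supseteq\mathcal{S}$ a set of Pauli strings and reads off $\dim=|\mathcal{P}|\ge|\mathcal{S}|$, whereas you make explicit that the underlying reason is the orthonormality, hence linear independence, of distinct Pauli strings under the trace inner product --- a point the paper leaves implicit. Both routes are sound; yours is, if anything, a little more careful about where the linear independence comes from.
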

\begin{proof}
(1) Let $\mathcal{S}_1=\{P_1,P_2,\cdots,P_m\}$. $\mathfrak{g}(\mathcal{S}_1)$ is spanned by $[P_{i_1},[P_{i_2},[\cdots ,[P_{i_{r-1}},P_{i_r}]\cdots]]]$. By the assumption, $P_{i_1}, P_{i_2},\cdots, P_{i_r}$ are also elements of $\mathfrak{g}(\mathcal{S}_2)$. Since $\mathfrak{g}(\mathcal{S}_2)$ is closed under the commutator, terms such as $[P_{i_1},[P_{i_2},[\cdots ,[P_{i_{r-1}},P_{i_r}]\cdots]]]$ are also elements of $\mathfrak{g}(\mathcal{S}_2)$. Hence $\mathfrak{g}(\mathcal{S}_1)\subset\mathfrak{g}(\mathcal{S}_2)$ is satisfied.\\
(2) Since $\mathfrak{g}(\mathcal{S}) = \text{Span}_{\mathbb{R}} (\mathcal{P})$ for some $\mathcal{P}\supseteq S$, the dimension of $\mathfrak{g}(\mathcal{S})$ satisfies $\text{dim}(\mathfrak{g}(\mathcal{S})) = |\mathcal{P}| \ge |\mathcal{S}|$.
\end{proof}

When the Hamiltonian of an $n$-qubit system is given in the form of 
\begin{align}
    H=\sum_{i=1}^L c_iP_i, 
\end{align}
where $P_i$ are the Pauli strings and $c_i$ are real-valued coefficients, the Hamiltonian algebra of $H$ is the Lie algebra generated by the set of Pauli strings in the Hamiltonian $\{P_1, P_2,\cdots, P_L\}$. We simply denote the Hamiltonian algebra of $H$ as $\mathfrak{g}(H)$ in the following.

\section{\label{sec:Main Result}Main Result}
In this section, we discuss the Hamiltonian algebra of the spin-1/2 fermion models.

\subsection{\label{subsec:Fermion Model}Fermion Model and Jordan-Wigner Transformation}
We consider the general free fermion model,
\begin{align}\label{eq:freefermion}
H_0=\sum_{\sigma=\uparrow,\downarrow}\sum_{(k,l)\in G}t_{kl}(c^{\dagger}_{k\sigma}c_{l\sigma}+\text{h.c.}),
\end{align}
where $\sigma$ is a spin and $c$ and $c^{\dagger}$ are the fermion annihilation and creation operators respectively. In Eq.~\eqref{eq:freefermion}, $G$ denotes a connected graph with $N$ vertices and $(k,l)\in G$ means that the vertices $k$ and $l$ are connected by an edge in $G$. Here, a connected graph is defined as a graph where, for any two vertices, a path connecting them exists. For example, a one-dimensional chain and a two-dimensional square lattice are both connected graphs. Examples of connected graphs are shown in Fig.~\ref{fig:connectedgraphs}. When the graph $G$ is disconnected, the structure of the Hamiltonian algebra reduces to the direct sum of the connected components. Hence, we restrict our attention to the case where the graph representing the hopping terms is connected. 

\begin{figure}[t]
\includegraphics[scale=0.5]{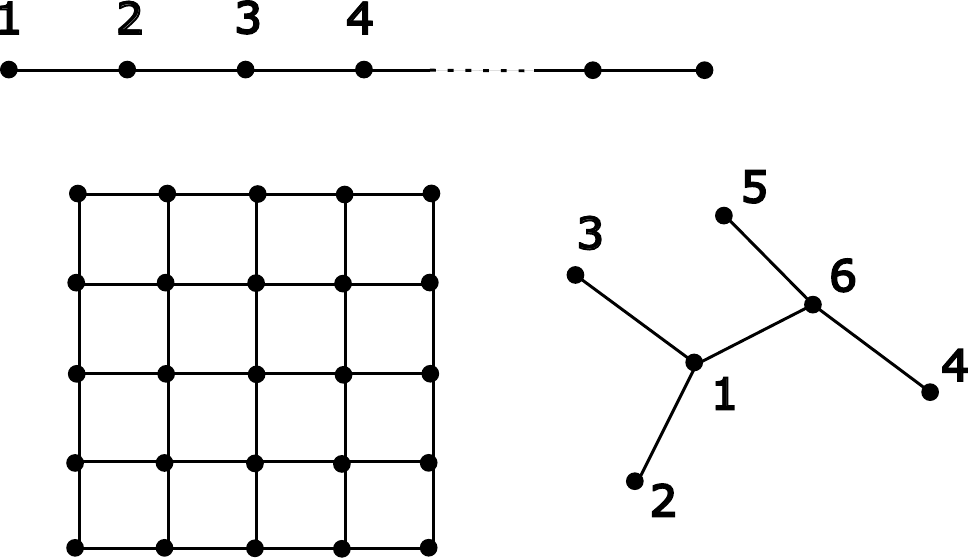}
\caption{\label{fig:connectedgraphs} Examples of connected graphs.}
\end{figure}

The Jordan-Wigner transformation of the fermion operators is given by the following equations:
\begin{align}
    \label{eq:jw_fermion_def_dag}c^{\dagger}_{k\sigma}=Z_1Z_2\cdots Z_{i-1}\left(\frac{X_i + iY_i}{2}\right),\\\label{eq:jw_fermion_def}
    c_{k\sigma}=Z_1Z_2\cdots Z_{i-1}\left(\frac{X_i - iY_i}{2}\right),\\
    \nonumber
\end{align}
where the fermion indices $(k,\sigma)$ are aligned in a row and assigned a new index $i$ for the qubit configuration. We assume that the spin-up fermion sites correspond to the sites belonging to $S_{\text{up}}=\{u(0),u(1),\cdots,u(N-1)\}$ in the qubit configuration, while the spin-down fermion sites correspond to $S_{\text{down}}=\{d(0),d(1),\cdots,d(N-1)\}$. These sets satisfy the following relations:
\begin{align}
    S_{\text{up}}\cup S_{\text{down}}&=\{0,1,2,\cdots,2N-2,2N-1\},\\
    S_{\text{up}}\cap S_{\text{down}}&=\emptyset.    
\end{align}
Using the Jordan-Wigner transformation, the free fermion model~\eqref{eq:freefermion} is transformed to
\begin{align}\label{eq:freejw}
    H_{0,\text{JW}}=&\frac12\sum_{(k,l)\in G}t_{u(k)u(l)}(\overrightarrow{X_{u(k)}X_{u(l)}}+\overrightarrow{Y_{u(k)}Y_{u(l)}})\nonumber\\
    &+\frac12\sum_{(k,l)\in G}t_{d(k)d(l)}(\overrightarrow{X_{d(k)}X_{d(l)}}+\overrightarrow{Y_{d(k)}Y_{d(l)}}),
\end{align}
where we define $\overrightarrow{Q_iQ_j}$ ($Q_i=X_i$, $Y_i$, or $Z_i$) as
\begin{align}
    \overrightarrow{Q_iQ_j}=
    \begin{cases}
        Q_iZ_{i+1}Z_{i+2}\cdots Z_{j-1}Q_j\  (i<j);\\
        Q_jZ_{j+1}Z_{j+2}\cdots Z_{i-1}Q_i\  (j<i).
    \end{cases}
\end{align}
On the other hand, we also consider the fermion model with only single-site Coulomb interaction, which is given by
\begin{align}\label{eq:oneintfermion}
    H_{1}=&H_0+4U\left(n_{k_0\uparrow}-\frac12\right)\left(n_{l_0\downarrow}-\frac12\right),
\end{align}
where $U\neq0$ and $n_{i\sigma}=c^{\dagger}_{i\sigma}c_{i\sigma}$ is the number operator at the site $i$. Since Eqs.~\eqref{eq:jw_fermion_def_dag} and~\eqref{eq:jw_fermion_def} yield  $2c^{\dagger}_{i'\sigma}c_{i'\sigma}-1=Z_{i}$, the Jordan-Wigner transformed Hamiltonian of the interacting fermion model \eqref{eq:oneintfermion} is obtained as
\begin{align}\label{eq:oneintjw}
    H_{1,\text{JW}}=&H_{0,\text{JW}}+UZ_{u(0)}Z_{d(0)}.
\end{align}
In Eqs.~\eqref{eq:oneintfermion} and \eqref{eq:oneintjw}, the site indices $(k_0,\uparrow)$ and $(l_0,\downarrow)$ correspond to $u(0)$ and $d(0)$, respectively. We impose the relations $u(1)<u(2)<\cdots<u(N-1)$ and $d(1)<d(2)<\cdots<d(N-1)$, which do not involve $u(0)$ and $d(0)$. 

The main result of our paper is that the Hamiltonian algebra of the free fermion model in Eq.~\eqref{eq:freejw} and that of the fermion model with single-site Coulomb interaction  in Eq.~\eqref{eq:oneintjw} satisfy the following properties:

\begin{thm}\label{thm:hamalgdim}
\begin{enumerate}[label=(\arabic*)]
    \item The dimension of the Hamiltonian algebra of  Eq.~\eqref{eq:freejw} satisfies
    \begin{align}
        \text{dim}(\mathfrak{g}(H_{0,\text{JW}}))\le 2N(2N-1).
    \end{align}
    Therefore, it scales polynomially with $N$.
    \item The dimension of the Hamiltonian algebra of  Eq.~\eqref{eq:oneintjw} satisfies
    \begin{align}\label{eq:dim_ineq_hamilang_interact}
        \text{dim}(\mathfrak{g}(H_{1,\text{JW}}))\ge 2^{N-1}.
    \end{align}
    Therefore, it scales exponentially with $N$.
\end{enumerate}
\end{thm}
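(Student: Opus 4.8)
My plan is to exploit the Majorana-fermion reformulation of the Jordan--Wigner images, since this converts statements about Pauli strings into statements about monomials in Majorana operators whose commutators are easy to track. For part (1), I would introduce the $4N$ Majorana operators $\gamma^x_i = Z_1\cdots Z_{i-1}X_i$ and $\gamma^y_i = Z_1\cdots Z_{i-1}Y_i$ attached to the $2N$ qubit sites, and observe that each generator of $\mathfrak{g}(H_{0,\text{JW}})$, namely $\overrightarrow{X_{u(k)}X_{u(l)}}$, $\overrightarrow{Y_{u(k)}Y_{u(l)}}$ and their down-spin counterparts, is (up to a phase) a product $\gamma_a\gamma_b$ of two Majoranas lying in the same spin sector. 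The two structural facts I would use are: (i) a commutator of two Majorana bilinears is again a Majorana bilinear, so the span of all $\gamma_a\gamma_b$ is closed under the bracket (a copy of $\mathfrak{so}$), and (ii) bilinears built from disjoint sets of Majorana modes commute, a representation-independent Clifford fact that holds even when the Pauli supports overlap through $Z$-strings. Since every generator is an up-sector or a down-sector bilinear, and up- and down-sector bilinears involve disjoint Majorana modes, every iterated commutator stays in $\mathrm{span}\{\gamma_a\gamma_b\}_{\text{up}}\oplus\mathrm{span}\{\gamma_a\gamma_b\}_{\text{down}}$. Each sector contributes the bilinears of $2N$ Majoranas ($N$ modes, two Majoranas each), a copy of $\mathfrak{so}(2N)$ of dimension $\binom{2N}{2}=N(2N-1)$, giving $\dim\mathfrak{g}(H_{0,\text{JW}})\le 2\binom{2N}{2}=2N(2N-1)$. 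I would stress that only the containment is needed, not saturation.

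For part (2) the mechanism is the reverse. Via the same dictionary I would first note that the added term $Z_{u(0)}Z_{d(0)}\propto\gamma^x_{u(0)}\gamma^y_{u(0)}\gamma^x_{d(0)}\gamma^y_{d(0)}$ is a quartic Majorana monomial, so it breaks the bilinear (Gaussian) structure that confined part (1). To obtain a lower bound I would invoke Proposition~\ref{prop:generatedliealg}(2): it suffices to exhibit $2^{N-1}$ distinct Pauli strings inside $\mathfrak{g}(H_{1,\text{JW}})$. By Proposition~\ref{prop:generatedliealg}(1) I may also discard generators, replacing the connected hopping graph by a convenient spanning tree in each spin sector; nested commutators of adjacent hoppings still realize an effective hopping between $u(0)$ and any $u(k)$, and between $d(0)$ and any $d(k)$.

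The construction I would pursue builds the family by a ``degree-climbing'' induction. Commuting the quartic seed $Z_{u(0)}Z_{d(0)}$ with hopping terms first transports its two factors, producing all operators of the form (up-sector bilinear)$\times$(down-sector bilinear). Commuting two such degree-$4$ operators whose up-parts have disjoint support while their down-parts share a single Majorana mode yields a higher monomial of the form (degree-$4$ up)$\times$(degree-$2$ down); iterating alternately in the two sectors raises the Majorana degree without bound, whereas choices in which the shared structure is wrong collapse back to lower degree. The goal is to organize these commutators so that, for each $k\in\{1,\dots,N-1\}$, one may independently decide whether to ``activate'' a marker at the pair $(u(k),d(k))$, producing a family $\{P_S\}_{S\subseteq\{1,\dots,N-1\}}$ of $2^{N-1}$ operators. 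Because distinct $S$ give Majorana monomials supported on different mode sets, the corresponding Pauli strings are pairwise distinct, and Proposition~\ref{prop:generatedliealg}(2) then gives $\dim\mathfrak{g}(H_{1,\text{JW}})\ge 2^{N-1}$.

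The hard part will be the commutator bookkeeping in the presence of the Jordan--Wigner $Z$-strings. I must fix the site ordering (the constraints $u(1)<\cdots<u(N-1)$, $d(1)<\cdots<d(N-1)$ with $u(0),d(0)$ anchored are precisely what keep the strings between the two spin blocks from interfering), verify that each $P_S$ is genuinely reachable as an iterated commutator of the chosen generators, and verify distinctness. The delicate points are (i) establishing reachability for all $2^{N-1}$ subsets simultaneously, which is the real closure step, and (ii) ensuring the climbing commutators never collapse to lower degree; the Majorana-degree grading is the tool I would use to control both.
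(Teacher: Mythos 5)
Your part~(1) is correct and is essentially the paper's argument in Majorana clothing: the paper exhibits the closed subspace $\mathfrak{g}_{\text{large}}=\mathrm{Span}(S_{XX}\cup S_{YY}\cup S_{XY}\cup S_{YX}\cup S_Z)$ of dimension $N(2N-1)$ per spin sector and itself identifies it with $\mathfrak{so}(2N)\oplus\mathfrak{so}(2N)$; your observation that each hopping term is a Majorana bilinear supported on modes of a single spin sector, that bilinears are closed under the bracket, and that bilinears on disjoint mode sets commute, reproduces exactly that containment and the count $2\binom{2N}{2}=2N(2N-1)$. No issue there.

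Part~(2), however, contains a genuine gap: the entire content of the lower bound is the explicit construction of $2^{N-1}$ distinct Pauli strings reachable by iterated commutators, and your proposal does not carry this out. You correctly reduce the problem via Proposition~\ref{prop:generatedliealg}(2) and correctly identify that $Z_{u(0)}Z_{d(0)}$ is a quartic Majorana monomial that must be ``transported'' through both spin sectors, but the family $\{P_S\}_{S\subseteq\{1,\dots,N-1\}}$ is only postulated, and you yourself defer precisely the two steps that constitute the proof: reachability of every member of the family, and non-collapse of the climbing commutators. In the paper these steps are not routine bookkeeping. Reachability requires first establishing Lemma~\ref{lem:ladderparts} (the long-range strings $\overrightarrow{X_{u(0)}P_{u(i)}}$, $\overrightarrow{Y_{u(0)}\overline{P_{u(i)}}}$, with the $X$/$Y$ assignment depending on the parity of the path length in $G$ --- a detail your ``effective hopping'' phrase glosses over) and then Lemma~\ref{lem:rungparts}, which produces the specific degree-four ``rung'' operators $Z_{u(0)}Z_{u(i)}\cdot\overrightarrow{P_{d(i)}\overline{P_{d(i+1)}}}$ and their three companions via the intermediate operators $\Sigma_1(i),\dots,\Sigma_4(i)$; these rungs are exactly what lets a string switch spin sectors at an arbitrary position $i$ while remaining in the algebra. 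Without them, your ``commute two degree-4 operators whose down-parts share a single Majorana mode'' step is not backed by any verified elements of $\mathfrak{g}(H_{1,\text{JW}})$. Moreover, the family the paper actually reaches is indexed by non-crossing lane-paths on a ladder (with $X$/$Y$ appearing only at lane-transition points), not by independent activation of markers at pairs $(u(k),d(k))$; it is not evident that your subset-indexed family is reachable at all, and the injectivity argument (distinct paths give distinct supports) is tied to the path picture. As written, the proposal establishes the strategy and the counting target but not the theorem.
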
\noindent
The first statement has already been discussed essentially in previous studies~\cite{atia2017fast,Gu2021fastforwarding} in the context of the fast-forwarding method. However, there are differences between the previous and the present statement. For example, we treat the Jordan-Wigner transformed fermion models instead of treating the corresponding models in the fermion representation. Although our proof seems to be based on an idea similar to that in Ref.~\cite{Gu2021fastforwarding}, we restate the first statement of the theorem because, in addition to the above reasons, we intend to compare this result with the second statement and elucidate the critical term that pushes the dimension of the Hamiltonian algebra into the exponential order. We prove Theorem~\ref{thm:hamalgdim} in the following subsections.

\subsection{\label{subsec:proof free}Proof of Theorem: Free Fermion}

We prove the statement in Theorem~\ref{thm:hamalgdim}~(1). The Hamiltonian algebra of the Hamiltonian in Eq.~\eqref{eq:freejw} is the direct sum of the spin-up and spin-down parts of the Hamiltonian, \begin{align}
\mathfrak{g}(H_{0,\text{JW}})=\mathfrak{g}(H_{0,JW,u})\oplus\mathfrak{g}(H_{0,JW,d}),
\end{align}
where 
\begin{align}
H_{0,JW,\sigma}=\frac12\sum_{(k,l)\in G}t_{\sigma(k)\sigma(l)}(\overrightarrow{X_{\sigma(k)}X_{\sigma(l)}}+\overrightarrow{Y_{\sigma(k)}Y_{\sigma(l)}})
\end{align}
for $\sigma=u$, $d$. Here, $\mathfrak{g}(H_{0,JW,u})$ is included in the following vector space:
\begin{align}
    \mathfrak{g}_{\text{large}}&=\text{Span}_{\mathbb{R}}(S_{XX}\cup S_{YY}\cup S_{XY}\cup S_{YX}\cup S_{Z}),
\end{align}
where the subsets such as $S_{XX}$ are given by
\begin{align}
    S_{XX}&=\{\overrightarrow{X_{u(k)}X_{u(l)}}|k,l=0,1,\cdots,N-1\},\\
    S_{YY}&=\{\overrightarrow{Y_{u(k)}Y_{u(l)}}|k,l=0,1,\cdots,N-1\},\\
    S_{XY}&=\{\overrightarrow{X_{u(k)}Y_{u(l)}}|k,l=0,1,\cdots,N-1\},\\
    S_{YX}&=\{\overrightarrow{Y_{u(k)}X_{u(l)}}|k,l=0,1,\cdots,N-1\},\\
    S_{Z}&=\{Z_{u(k)}|k=0,1,\cdots,N-1\}.   
\end{align}
The vector space $\mathfrak{g}_{\text{large}}$ is closed under the commutator. Therefore, by using Proposition~\ref{prop:generatedliealg}, we obtain $\mathfrak{g}(H_{0,JW,u})\subset\mathfrak{g}_{\text{large}}$, and hence 
\begin{align}
    \text{dim}(\mathfrak{g}(H_{0,JW,u}))\le\text{dim}(\mathfrak{g}_{\text{large}})=N(2N-1).
\end{align}
We also obtain the same inequality for $\mathfrak{g}(H_{0,JW,d})$. Therefore, we obtain
\begin{align}
    \text{dim}(\mathfrak{g}(H_{0,\text{JW}}))\le 2N(2N-1).
\end{align}
This proves Theorem~\ref{thm:hamalgdim}~(1).

The Lie algebra $\mathfrak{g}_{\text{large}}$ is isomorphic to the Hamiltonian algebra of the XY model with magnetic field~\cite{kokcu2022cartanfastforward}:
\begin{align}\label{eq:xymodelwithz}
    H=\sum_{k=1}^{N-1}(X_kX_{k+1}+Y_kY_{k+1})+\sum_{k=1}^{N}Z_k.
\end{align}
According to Ref.~\cite{wiersema2024classification}, the Hamiltonian algebra of Eq.~\eqref{eq:xymodelwithz} is isomorphic to $\mathfrak{so}(2N)$. The dimension of the Lie algebra $\mathfrak{so}(2N)$ is $N(2N-1)$, which is consistent with our result. Therefore, the upper bound of the dimension of the Hamiltonian algebra for free fermion models can be intuitively understood as the dimension of $\mathfrak{so}(2N)\oplus\mathfrak{so}(2N)$.

\subsection{\label{subsec:proof interact} Proof of Theorem: Interacting Fermion}
Next, we prove the statement in Theorem~\ref{thm:hamalgdim}~(2). To prove this, we systematically construct distinct $2^n$ Pauli strings in the Hamiltonian algebra $\mathfrak{g}(H_{1,\text{JW}})$ by using non-crossing paths on a ladder diagram that we will introduce below. 

In the main text, we restrict our discussion to the case where $u(0)$ ($d(0)$) is the smallest number in the set $\{u(0),u(1),\cdots,u(N-1)\}$ ($\{d(0),d(1),\cdots,d(N-1)\}$). For general $u(0)$ and $d(0)$, we also obtain the same claims, but we need to modify the proof slightly. The modification is mentioned in Appendix~\ref{app:correction}.

\begin{lem}\label{lem:ladderparts}
    For any $i \in \{1,2,\cdots,N-1\}$, there exists a Pauli operator $P_{u(i)}=X_{u(i)}$ or $Y_{u(i)}$ (and similarly, $P_{d(i)}=X_{d(i)}$ or $Y_{d(i)}$) such that
    \begin{align}
    \overrightarrow{X_{u(0)}P_{u(i)}},\quad \overrightarrow{Y_{u(0)}\overline{P_{u(i)}}}\in \mathfrak{g}(H_{1,\text{JW}}),\\
    \overrightarrow{X_{d(0)}P_{d(i)}},\quad \overrightarrow{Y_{d(0)}\overline{P_{d(i)}}}\in \mathfrak{g}(H_{1,\text{JW}}),
    \end{align}
    where $\overline{P_{k}}$ is defined as
    \begin{align}
        \overline{P_{k}}=
        \begin{cases}
            Y_{k}\quad (\text{if $P_{k}=X_{k}$});\\
            X_{k}\quad (\text{if $P_{k}=Y_{k}$}).
        \end{cases}
    \end{align}
    Furthermore, for such $P_k$'s and $\overline{P_k}$'s, the following relations also hold:
    \begin{align}
    \overrightarrow{P_{u(i)}\overline{P_{u(i+1)}}},\quad \overrightarrow{\overline{P_{u(i)}}P_{u(i+1)}},\in\mathfrak{g}(H_{1,\text{JW}}),\\
    \overrightarrow{P_{d(i)}\overline{P_{d(i+1)}}},\quad \overrightarrow{\overline{P_{d(i)}}P_{d(i+1)}},\in\mathfrak{g}(H_{1,\text{JW}}).\label{eq:PdPdingHJW}
    \end{align}
\end{lem}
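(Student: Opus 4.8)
\textbf{Proof strategy for Lemma~\ref{lem:ladderparts}.}
The plan is to generate the required two-qubit ``rungs'' and ``diagonals'' of the ladder one site at a time, by induction on the site index $i$, using only the terms already known to lie in $\mathfrak{g}(H_{1,\text{JW}})$ together with closure under the commutator. The key seed is the interaction term $Z_{u(0)}Z_{d(0)}$, which is the only term distinguishing $H_{1,\text{JW}}$ from the free case. First I would note that the hopping terms of Eq.~\eqref{eq:freejw} supply, for each edge $(k,l)\in G$, the Pauli strings $\overrightarrow{X_{\sigma(k)}X_{\sigma(l)}}$ and $\overrightarrow{Y_{\sigma(k)}Y_{\sigma(l)}}$ as generators. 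Commuting one of these with $Z_{u(0)}Z_{d(0)}$ flips exactly the Pauli on the endpoint sitting at $u(0)$ or $d(0)$ (since $[Z,X]\propto Y$ and $[Z,Y]\propto X$, while $Z$ commutes through the Jordan-Wigner $Z$-tail), which is precisely how a string of the form $\overrightarrow{X_{u(0)}P_{u(i)}}$ gets converted into $\overrightarrow{Y_{u(0)}\overline{P_{u(i)}}}$ and vice versa. This is the mechanism that couples the spin-up and spin-down ladders through the single interaction vertex.

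The core of the argument is an induction that propagates these strings outward from the interaction site to every site $u(i)$ and $d(i)$. Since $G$ is connected, for each vertex $i$ there is a path in $G$ from the interaction vertex $0$ to $i$; I would walk along such a path and show that if $\overrightarrow{X_{u(0)}P_{u(j)}}$ and $\overrightarrow{Y_{u(0)}\overline{P_{u(j)}}}$ are already in the algebra for the near endpoint $j$ of an edge $(j,j')$, then commuting them against the hopping generators $\overrightarrow{X_{u(j)}X_{u(j')}}$ or $\overrightarrow{Y_{u(j)}Y_{u(j')}}$ produces the corresponding strings reaching $j'$. The bracket of two strings that overlap on a single site $u(j)$ contracts the two Paulis there into a $Z$ (which then merges into the Jordan-Wigner $Z$-tail of the longer string) and leaves a Pauli at the far site $u(j')$; tracking the parity of $X$'s versus $Y$'s at the two endpoints shows the output is again of the claimed barred/unbarred form. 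This establishes the first pair of displayed relations; the second pair, $\overrightarrow{P_{u(i)}\overline{P_{u(i+1)}}}$ and $\overrightarrow{\overline{P_{u(i)}}P_{u(i+1)}}$, then follows by commuting two of the already-constructed rung strings $\overrightarrow{X_{u(0)}P_{u(i)}}$ and $\overrightarrow{Y_{u(0)}\overline{P_{u(i+1)}}}$ with each other, so that the two $X/Y$ operators at the common base site $u(0)$ contract away and only the operators at $u(i)$ and $u(i+1)$ survive as a connecting diagonal.

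The main obstacle I anticipate is bookkeeping the Jordan-Wigner $Z$-tails and the resulting overall sign/phase under the contractions, so that each commutator genuinely yields a single Pauli string of the asserted $\overrightarrow{\cdot\,\cdot}$ form rather than a sum or the identity. In particular I must verify that at each step the two strings being bracketed anticommute (overlap in an odd number of non-commuting positions) so the commutator is nonzero, and that the choice of $P_{u(i)}\in\{X_{u(i)},Y_{u(i)}\}$ can be made consistently with the ordering constraints $u(1)<\cdots<u(N-1)$ and with the assumed minimality of $u(0)$; the particular $X$-versus-$Y$ label at each site is not prescribed in advance but is fixed by which generator is used along the chosen path, which is exactly why the statement only asserts existence of \emph{some} $P_{u(i)}$. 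The symmetry between the up and down sectors lets me run the identical argument with $u$ replaced by $d$, and the interaction term ties the two base sites together so that both ladders are populated. Once the $Z$-tail arithmetic is handled, the rest is a routine propagation along paths of the connected graph $G$.
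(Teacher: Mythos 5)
Your overall plan --- propagate $\overrightarrow{X_{u(0)}\,\cdot}$ and $\overrightarrow{Y_{u(0)}\,\cdot}$ strings outward along a path in the connected graph $G$ by bracketing against the hopping generators, then contract two such strings at the common base site $u(0)$ to obtain the diagonal strings --- is the same as the paper's, and the path induction itself is sound. However, two concrete steps as you state them fail. First, the commutator with $Z_{u(0)}Z_{d(0)}$ does \emph{not} convert $\overrightarrow{X_{u(0)}P_{u(i)}}$ into $\overrightarrow{Y_{u(0)}\overline{P_{u(i)}}}$: it flips only the Pauli at $u(0)$ and leaves behind a spectator $Z_{d(0)}$, giving $\overrightarrow{Y_{u(0)}P_{u(i)}}\cdot Z_{d(0)}$ (this is precisely the string the paper exploits later, in Lemma~\ref{lem:rungparts}, not here). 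Fortunately this mechanism is unnecessary for the present lemma: the Hamiltonian already contains \emph{both} $\overrightarrow{X_{u(0)}X_{u(j_1)}}$ and $\overrightarrow{Y_{u(0)}Y_{u(j_1)}}$ for the first edge of the path, so both families are seeded at the base case and propagate in parallel; the interaction term plays no role in Lemma~\ref{lem:ladderparts} at all, contrary to your framing of it as the key seed.

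Second, the pair you propose for the diagonal strings, $[\overrightarrow{X_{u(0)}P_{u(i)}},\,\overrightarrow{Y_{u(0)}\overline{P_{u(i+1)}}}]$, vanishes: with $u(0)<u(i)<u(i+1)$ the two strings disagree non-trivially at exactly two sites ($X$ versus $Y$ at $u(0)$, and $P_{u(i)}$ versus the $Z$-tail at $u(i)$), an even number, so they commute. Moreover, $X$ and $Y$ at $u(0)$ would multiply to $Z_{u(0)}$, not the identity, so the base-site operators would not ``contract away'' even if the bracket were nonzero. The correct choice is to bracket two strings carrying the \emph{same} Pauli at the base site,
\begin{align}
[\overrightarrow{X_{u(0)}P_{u(i)}},\overrightarrow{X_{u(0)}P_{u(i+1)}}]\propto\overrightarrow{\overline{P_{u(i)}}P_{u(i+1)}},\qquad
[\overrightarrow{Y_{u(0)}\overline{P_{u(i)}}},\overrightarrow{Y_{u(0)}\overline{P_{u(i+1)}}}]\propto\overrightarrow{P_{u(i)}\overline{P_{u(i+1)}}},
\end{align}
which is what the paper does. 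With these two corrections your argument coincides with the paper's proof.
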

\begin{proof}
    Since we assume that the graph $G$ is connected, for any $u(i)$, there exists a path from $u(0)$ to $u(i)$ in the graph $G$. Let the path be
    \begin{align}
        u(0)\to u(j_1)\to u(j_2)\to\cdots \to u(j_m)=u(i).
    \end{align} The Pauli strings $\overrightarrow{X_{u(0)}X_{u(j_1)}}$, $ \overrightarrow{Y_{u(0)}Y_{u(j_1)}}$, and all the terms of the form $\overrightarrow{X_{u(j_k)}X_{u(j_{k+1})}}$ and $ \overrightarrow{Y_{u(j_k)}Y_{u(j_{k+1})}}$ are the elements of  $\mathfrak{g}(H_{1,\text{JW}})$ because they appear in the terms of the Hamiltonian \eqref{eq:oneintjw}.

    We replace $i$ in the claim with $j_k$ and prove it inductively on $k=1,\cdots,m$. First, in the case of $k=1$, since $\overrightarrow{X_{u(0)}X_{u(j_1)}}$ and $\quad \overrightarrow{Y_{u(0)}Y_{u(j_1)}}$ belong to $ \mathfrak{g}(H_{1,\text{JW}})$, the claim holds by setting $P_{u(j_1)}=X_{u(j_1)}$. 
    Then, we consider the case of $k=2$. In the case $u(j_1)<u(j_2)$, $\overrightarrow{X_{u(0)}X_{u(j_1)}}$ and $\overrightarrow{Y_{u(j_1)}Y_{u(j_2)}}$ anticommute, while $\overrightarrow{Y_{u(0)}Y_{u(j_1)}}$ and $\overrightarrow{X_{u(j_1)}X_{u(j_2)}}$ also anticommute. Hence we have
    \begin{align}
    [\overrightarrow{X_{u(0)}X_{u(j_1)}},\overrightarrow{Y_{u(j_1)}Y_{u(j_2)}}]\propto \overrightarrow{X_{u(0)}Y_{u(j_2)}}\in\mathfrak{g}(H_{1,\text{JW}}),\\
    [\overrightarrow{Y_{u(0)}Y_{u(j_1)}},\overrightarrow{X_{u(j_1)}X_{u(j_2)}}]\propto \overrightarrow{Y_{u(0)}X_{u(j_2)}}\in\mathfrak{g}(H_{1,\text{JW}}).
    \end{align}
    On the other hand, in the case $u(j_1)>u(j_2)$, $\overrightarrow{X_{u(0)}X_{u(j_1)}}$ ($\overrightarrow{Y_{u(0)}Y_{u(j_1)}}$) and $\overrightarrow{X_{u(j_1)}X_{u(j_2)}}$ ($\overrightarrow{Y_{u(j_1)}Y_{u(j_2)}}$) anticommute. Hence, by taking the commutator of these Pauli strings, we can show that the Pauli strings $\overrightarrow{X_{u(0)}Y_{u(j_2)}}$ and $\overrightarrow{Y_{u(0)}X_{u(j_2)}}$ belong to $\mathfrak{g}(H_{1,\text{JW}})$. In any case, $\overrightarrow{X_{u(0)}Y_{u(j_2)}},\quad \overrightarrow{Y_{u(0)}X_{u(j_2)}}\in\mathfrak{g}(H_{1,\text{JW}})$ holds and the claim holds by setting $P_{u(j_2)}=Y_{u(2)}$.
    
    Then, by the same discussion, we obtain 
    \begin{align}
        \overrightarrow{X_{u(0)}X_{u(j_3)}},\quad \overrightarrow{Y_{u(0)}Y_{u(j_3)}}\in\mathfrak{g}(H_{1,\text{JW}}),
    \end{align}
    by taking the commutators of the Pauli strings $\overrightarrow{X_{u(0)}Y_{u(j_2)}}$, $\overrightarrow{Y_{u(0)}X_{u(j_2)}}$, $\overrightarrow{X_{u(j_2)}X_{u(j_3)}}$, and $\overrightarrow{Y_{u(j_2)}Y_{u(j_3)}}$.
    
    By using the mathematical induction, for any $k$, we obtain 
    \begin{align}
        \overrightarrow{X_{u(0)}X_{u(j_k)}},\quad \overrightarrow{Y_{u(0)}Y_{u(j_k)}}\in \mathfrak{g}(H_{1,\text{JW}})
    \end{align}
    if the path length $k$ is odd, or
    \begin{align}
        \overrightarrow{X_{u(0)}Y_{u(j_k)}},\quad \overrightarrow{Y_{u(0)}X_{u(j_k)}}\in \mathfrak{g}(H_{1,\text{JW}})
    \end{align}
    if $k$ is even. By setting $k=m$, we obtain the desired result since $j_m=i$. 
    
    The last sentence in the statement follows from 
    \begin{align}
        [\overrightarrow{X_{u(0)}P_{u(i)}},\overrightarrow{X_{u(0)}P_{u(i+1)}}]\propto \overrightarrow{\overline{P_{u(i)}}P_{u(i+1)}},\\
        [\overrightarrow{Y_{u(0)}\overline{P_{u(i)}}},\overrightarrow{Y_{u(0)}\overline{P_{u(i+1)}}}]\propto \overrightarrow{P_{u(i)}\overline{P_{u(i+1)}}}.
    \end{align}
    A similar argument also yields Eq.~\eqref{eq:PdPdingHJW}.
\end{proof}

Next, we prove the following lemma which plays a role in constructing the rung parts of the ladder diagram. We still restrict our attention to the condition where $u(0)$ and $d(0)$ are the smallest in the following proof. A more general argument without this restriction is mentioned in Appendix.~\ref{app:correction}.

\begin{lem}\label{lem:rungparts}
    The following Pauli strings belong to the Hamiltonian algebra $\mathfrak{g}(H_{1,\text{JW}})$:
    \begin{align}\label{eq:rungstrings_lem}
    &Z_{d(0)}\cdot Z_{d(i)}\cdot \overrightarrow{P_{u(i)}\overline{P_{u(i+1)}}},\nonumber\\
    &Z_{d(0)}\cdot Z_{d(i)}\cdot\overrightarrow{\overline{P_{u(i)}}P_{u(i+1)}},\nonumber\\
    &Z_{u(0)}\cdot Z_{u(i)}\cdot \overrightarrow{P_{d(i)}\overline{P_{d(i+1)}}},\nonumber\\
    &Z_{u(0)}\cdot Z_{u(i)}\cdot\overrightarrow{\overline{P_{d(i)}}P_{d(i+1)}}.
    \end{align}
\end{lem}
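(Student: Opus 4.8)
The plan is to build every rung string by decorating the rail strings of Lemma~\ref{lem:ladderparts} with the two required $Z$ factors, using the interaction term $Z_{u(0)}Z_{d(0)}$ of Eq.~\eqref{eq:oneintjw} as the only ingredient beyond the free model. I first observe that the four families in Eq.~\eqref{eq:rungstrings_lem} split into two mirror pairs: the last two are obtained from the first two by the identical construction with the up and down sublattices interchanged. It therefore suffices to produce $Z_{d(0)}Z_{d(i)}\,\overrightarrow{P_{u(i)}\overline{P_{u(i+1)}}}$ and $Z_{d(0)}Z_{d(i)}\,\overrightarrow{\overline{P_{u(i)}}P_{u(i+1)}}$.

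The crux is manufacturing the factor $Z_{d(i)}$ for $i\ge1$, since the Hamiltonian supplies only the single $ZZ$ coupling at the ``$0$'' sites. My strategy is to first generate an \emph{effective coupling} $Z_{u(0)}Z_{d(i)}\in\mathfrak{g}(H_{1,\text{JW}})$. Flagging a down rail-to-site string from Lemma~\ref{lem:ladderparts} with the physical coupling flips its $d(0)$ endpoint and leaves a $Z_{u(0)}$ attached, and commuting the result against its partner string collapses the whole down rail to a single $Z$ at $d(i)$:
\begin{align}
[Z_{u(0)}Z_{d(0)},\ \overrightarrow{X_{d(0)}P_{d(i)}}]&\propto Z_{u(0)}\,\overrightarrow{Y_{d(0)}P_{d(i)}},\\
[Z_{u(0)}\,\overrightarrow{Y_{d(0)}P_{d(i)}},\ \overrightarrow{Y_{d(0)}\overline{P_{d(i)}}}]&\propto Z_{u(0)}Z_{d(i)}.
\end{align}
The first commutator is nonzero because $Z_{d(0)}$ anticommutes with $X_{d(0)}$, and the second because the two down rail-to-site strings differ only at $d(i)$, where $P_{d(i)}$ and $\overline{P_{d(i)}}$ anticommute and multiply to $Z_{d(i)}$.

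With both the physical coupling $Z_{u(0)}Z_{d(0)}$ and the effective coupling $Z_{u(0)}Z_{d(i)}$ available, I would attach the two flags to an up rail-to-site string by two successive commutators, each anticommuting only at $u(0)$ and hence merely flipping the $u(0)$ operator while depositing one $Z$ factor:
\begin{align}
[Z_{u(0)}Z_{d(0)},\ \overrightarrow{X_{u(0)}P_{u(i)}}]&\propto Z_{d(0)}\,\overrightarrow{Y_{u(0)}P_{u(i)}},\\
[Z_{u(0)}Z_{d(i)},\ Z_{d(0)}\,\overrightarrow{Y_{u(0)}P_{u(i)}}]&\propto Z_{d(0)}Z_{d(i)}\,\overrightarrow{X_{u(0)}P_{u(i)}},
\end{align}
and likewise starting from $\overrightarrow{Y_{u(0)}\overline{P_{u(i)}}}$. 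Finally I would form the flagged rail segment exactly as in the last step of Lemma~\ref{lem:ladderparts}, commuting the doubly-flagged string at site $i$ with the plain rail-to-site string at site $i+1$; since $Z_{d(0)}$ and $Z_{d(i)}$ sit on the opposite sublattice they ride along as spectators, yielding $Z_{d(0)}Z_{d(i)}\,\overrightarrow{\overline{P_{u(i)}}P_{u(i+1)}}$ and, from the $Y$–$\overline{P}$ family, $Z_{d(0)}Z_{d(i)}\,\overrightarrow{P_{u(i)}\overline{P_{u(i+1)}}}$.

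The main obstacle is the Jordan--Wigner $Z$-tail bookkeeping. Each commutator above is a signed product of Pauli strings, so the anticommutation checks that guarantee nonvanishing are robust, because an attached $Z$ only ever meets a $Z$ or an identity on the other sublattice and so never creates a spurious anticommuting site. What is delicate is the \emph{identity} of the resulting string: depending on how the up and down qubit indices interleave, an attached $Z_{d(0)}$ or $Z_{d(i)}$ may either extend or cancel a $Z$ already present in a tail, and one must check that after all such cancellations the product is exactly the claimed rung. The restriction that $u(0)$ and $d(0)$ are the smallest indices in their sublattices keeps this accounting tractable in the main text, while the fully general interleaving is what the argument of Appendix~\ref{app:correction} is designed to absorb.
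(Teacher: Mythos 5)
Your proof is correct, but it takes a genuinely different route from the paper's. The paper never isolates an effective long-range coupling such as $Z_{u(0)}Z_{d(i)}$; instead it forms four-legged products $\Sigma_k(i)=\overrightarrow{(\cdot)_{u(0)}(\cdot)_{u(i)}}\cdot\overrightarrow{(\cdot)_{d(0)}(\cdot)_{d(i)}}$ (obtained, as in your first step, by flagging a rail-to-site string with $Z_{u(0)}Z_{d(0)}$ and then commuting against the opposite-spin rail-to-site string), commutes $\Sigma_k(i)$ with $\Sigma_l(i+1)$ to produce strings carrying a rail segment on each sublattice plus a single $Z_{u(0)}$ or $Z_{d(0)}$, and finally strips off one segment by commuting with a plain $\overrightarrow{P\overline{P}}$ segment, which converts it into the $Z_{\sigma(0)}Z_{\sigma(i)}$ pair. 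Your construction is more modular: by first manufacturing the two-site operators $Z_{u(0)}Z_{d(i)}$ and $Z_{d(0)}Z_{u(i)}$ you reduce the rung construction to ``attach two commuting $Z$ flags, then extend the rail by one step exactly as in Lemma~\ref{lem:ladderparts},'' and you avoid the case analysis of which $\Sigma_k(i)$, $\Sigma_l(i+1)$ pairs have nonvanishing commutators (Fig.~\ref{fig:commutation_sigma}). The only loose phrase is that $Z_{u(0)}\overrightarrow{Y_{d(0)}P_{d(i)}}$ and $\overrightarrow{Y_{d(0)}\overline{P_{d(i)}}}$ ``differ only at $d(i)$'' --- they may also differ at $u(0)$ by a $Z$ versus a tail $Z$ or identity, but that difference is commuting, so the conclusion stands; you correctly flag the analogous tail-cancellation bookkeeping at the end, which is exactly the issue the paper's Appendix~\ref{app:correction} handles for general $u(0)$, $d(0)$.
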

\begin{proof}
    From Lemma~\ref{lem:ladderparts}, $\overrightarrow{X_{u(0)}P_{u(i)}}$ is an element of $\mathfrak{g}(H_{1,\text{JW}})$. In addition,  $Z_{u(0)}Z_{d(0)}$ is one of the Hamiltonian terms and hence this string belongs to $\mathfrak{g}(H_{1,\text{JW}})$. Then the following Pauli string is also an element of $\mathfrak{g}(H_{1,\text{JW}})$:
    \begin{align}
        [\overrightarrow{X_{u(0)}P_{u(i)}},Z_{u(0)}Z_{d(0)}]\propto \overrightarrow{Y_{u(0)}P_{u(i)}}\cdot Z_{d(0)}.
    \end{align}
    Moreover, by taking the commutator of this string and $\overrightarrow{X_{d(0)}P_{d(i)}}\in\mathfrak{g}(H_{1,\text{JW}})$, we can say that the string 
    \begin{align}
        \overrightarrow{Y_{u(0)}P_{u(i)}}\cdot \overrightarrow{Y_{d(0)}P_{d(i)}}
    \end{align}
    belongs to $\mathfrak{g}(H_{1,\text{JW}})$. Similarly, we obtain the following strings in $\mathfrak{g}(H_{1,\text{JW}})$:
    \begin{align}
        \overrightarrow{X_{u(0)}\overline{P_{u(i)}}}\cdot \overrightarrow{Y_{d(0)}P_{d(i)}},\\
        \overrightarrow{Y_{u(0)}P_{u(i)}}\cdot \overrightarrow{X_{d(0)}\overline{P_{d(i)}}},\\
        \overrightarrow{X_{u(0)}\overline{P_{u(i)}}}\cdot \overrightarrow{X_{d(0)}\overline{P_{d(i)}}}.
    \end{align}
    We introduce the following notations:
    \begin{align}\label{eq:Sigmas1}
        \Sigma_1(i)&:=\overrightarrow{Y_{u(0)}P_{u(i)}}\cdot \overrightarrow{Y_{d(0)}P_{d(i)}},\\
        \Sigma_2(i)&:=\overrightarrow{X_{u(0)}\overline{P_{u(i)}}}\cdot \overrightarrow{Y_{d(0)}P_{d(i)}},\\
        \Sigma_3(i)&:=\overrightarrow{Y_{u(0)}P_{u(i)}}\cdot \overrightarrow{X_{d(0)}\overline{P_{d(i)}}},\\
        \label{eq:Sigmas4}\Sigma_4(i)&:=\overrightarrow{X_{u(0)}\overline{P_{u(i)}}}\cdot \overrightarrow{X_{d(0)}\overline{P_{d(i)}}}.
    \end{align}
From the above discussion, we can say that the terms in Eqs.~\eqref{eq:Sigmas1} to~\eqref{eq:Sigmas4} belong to $\mathfrak{g}(H_{1,\text{JW}})$. By taking the commutator of $\Sigma_{k}(i)$ and $\Sigma_{l}(i+1)$ ($k,l=1,2,3,4$), we obtain the following relations:
\begin{align}
    \label{eq:ZPPPP1}
    Z_{u(0)}\cdot\overrightarrow{\overline{P_{u(i)}}\overline{P_{u(i+1)}}}\cdot \overrightarrow{\overline{P_{d(i)}}P_{d(i+1)}}\in\mathfrak{g}(H_{1,\text{JW}}),\\
    \label{eq:ZPPPP2}Z_{u(0)}\cdot\overrightarrow{P_{u(i)}P_{u(i+1)}}\cdot \overrightarrow{P_{d(i)}\overline{P_{d(i+1)}}}\in\mathfrak{g}(H_{1,\text{JW}}),\\
    \label{eq:ZPPPP3}Z_{d(0)}\cdot\overrightarrow{\overline{P_{d(i)}}\overline{P_{d(i+1)}}}\cdot \overrightarrow{\overline{P_{u(i)}}P_{u(i+1)}}\in\mathfrak{g}(H_{1,\text{JW}}),\\
    \label{eq:ZPPPP4}    Z_{d(0)}\cdot\overrightarrow{P_{d(i)}P_{d(i+1)}}\cdot \overrightarrow{P_{u(i)}\overline{P_{u(i+1)}}}\in\mathfrak{g}(H_{1,\text{JW}}).
\end{align}
The above calculation is schematically described in Fig.~\ref{fig:commutation_sigma}. Finally, we take the commutators of these strings and $\overrightarrow{P_{u(i)}\overline{P_{u(i+1)}}}$ or $\overrightarrow{\overline{P_{u(i)}}P_{u(i+1)}}$ or their counterparts with the down-spin index. Then we obtain the strings in Eq.~\eqref{eq:rungstrings_lem} belonging to $\mathfrak{g}(H_{1,\text{JW}})$.
\end{proof}

\begin{figure}[t]
\includegraphics[scale=0.5]{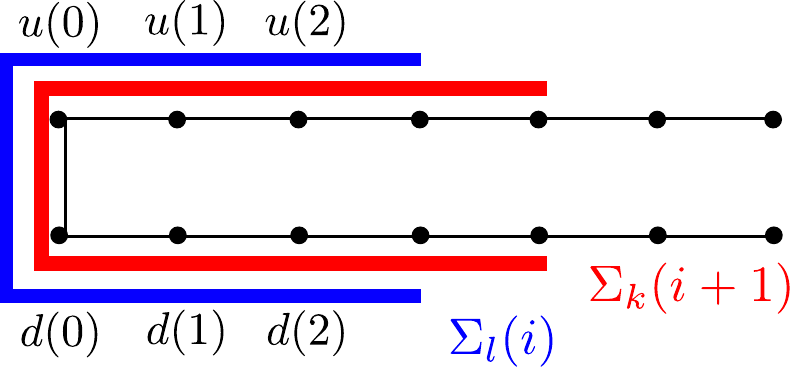}
\caption{\label{fig:commutation_sigma} Schematic description of calculation to obtain Eqs.~\eqref{eq:ZPPPP1} to ~\eqref{eq:ZPPPP4}.}
\end{figure}

On the basis of the above lemmas, we now prove Theorem~\ref{thm:hamalgdim}~(2) below. We consider the non-crossing paths from left to right side on a ladder graph, as in Fig.~\ref{fig:ladderdiagram1}. For each $k$, the path goes through one of the edges $(u(k),u(k+1))$ or $(d(k),d(k+1))$. Hence, we describe a path by the form $[(i_1,j_1)(i_2,j_2)\cdots(i_{N-1},j_{N-1})]$, where $(i_k,j_k)$ is either $(u(k),u(k+1))$ or $(d(k),d(k+1))$. We show an example of a path in Fig.~\ref{fig:contour_ladder1}. For each non-crossing path on the ladder $[(i_1,j_1)(i_2,j_2)\cdots(i_{N-1},j_{N-1})]$, we assign a Pauli string of the Hamiltonian algebra $\mathfrak{g}(H_{1,\text{JW}})$ in the following way:
\begin{enumerate}
    \item If $(i_1,j_1)=(u(1),u(2))$, we set a Pauli string either $Q_1=\overrightarrow{P_{u(1)}\overline{P_{u(2)}}}$ or $Q_1=\overrightarrow{\overline{P_{u(1)}}P_{u(2)}}$. If $(i_1,j_1)=(d(1),d(2))$, we set a Pauli string either $Q_1=\overrightarrow{P_{d(1)}\overline{P_{d(2)}}}$ or $Q_1=\overrightarrow{\overline{P_{d(1)}}P_{d(2)}}$. All these strings belong to $\mathfrak{g}(H_{1,\text{JW}})$ according to Lemma.~\ref{lem:ladderparts}.
    \item We iterate the following operation for $k=1,2,\cdots,N-2$:\\
    Depending on the combination of edges $(i_k,j_k)$ and $(i_{k+1},j_{k+1})$, we generate the new Pauli string $Q_{k+1}$ from $Q_{k}$.
    \begin{enumerate}
        \item\label{enum:operation_uu} $(i_k,j_k)=(u(k),u(k+1))$ and $(i_{k+1},j_{k+1})=(u(k+1),u(k+2))$:\\
        The string $Q_{k}$ anticommutes with one of the strings $\overrightarrow{P_{u(k+1)}
        \overline{P_{u(k+2)}}}$ or $\overrightarrow{\overline{P_{u(k+1)}}P_{u(k+2)}}$. By taking the commutator of such anticommuting strings, we obtain, for example,
        \begin{align}
            Q_{k}\cdot\overrightarrow{P_{u(k+1)}
        \overline{P_{u(k+2)}}}\in\mathfrak{g}(H_{1,\text{JW}}).
        \end{align}
        We define the resulting string as $Q_{k+1}$.
        
        \item $(i_k,j_k)=(d(k),d(k+1))$ and $(i_{k+1},j_{k+1})=(d(k+1),d(k+2))$:\\
        As in the case (a), we take the commutator of $Q_{k}$ and either $\overrightarrow{P_{d(k+1)}
        \overline{P_{d(k+2)}}}$ or $\overrightarrow{\overline{P_{d(k+1)}}P_{d(k+2)}}$. We define the nonvanishing one in the resulting strings as $Q_{k+1}$.
        
        \item $(i_k,j_k)=(u(k),u(k+1))$ and $(i_{k+1},j_{k+1})=(d(k+1),d(k+2))$:\\
        The string $Q_{k}$ anticommutes with the string $Z_{u(0)}\cdot Z_{u(k+1)}\cdot \overrightarrow{P_{d(k+1)}\overline{P_{d(k+2)}}}$. This string belongs to $\mathfrak{g}(H_{1,\text{JW}})$ according to Lemma~\ref{lem:rungparts}. Then we take the commutator of these strings to obtain
        \begin{align}
            Q_{k}\cdot Z_{u(0)}\cdot Z_{u(k+1)}\cdot \overrightarrow{P_{d(k+1)}\overline{P_{d(k+2)}}}\in\mathfrak{g}(H_{1,\text{JW}}).
        \end{align}
        We define the resulting string as $Q_{k+1}$.
        
        \item $(i_k,j_k)=(d(k),d(k+1))$ and $(i_{k+1},j_{k+1})=(u(k+1),u(k+2))$:\\
        As in the case (c), we take the commutator of $Q_{k}$ and $Z_{d(0)}\cdot Z_{d(k+1)}\cdot \overrightarrow{P_{u(k+1)}\overline{P_{u(k+2)}}}$. We define the resulting string as $Q_{k+1}$.
    \end{enumerate}
    \item We assign the resulting Pauli string $Q_{N-1}$ obtained in step 2. to the path.
\end{enumerate}
We graphically represent the process of step 2. in Fig.~\ref{fig:operation_all}. The reason why these operations succeed is as follows. At step 2. with $k$, if $(i_k,j_k)=(u(k),u(k+1))$ the tentative string $Q_{k}$ consists of 
\begin{itemize}
    \item $P_{u(k+1)}$ or $\overline{P_{u(k+1)}}$,
    \item X or Y Pauli matrices at some spin-up sites with indices smaller than $u(k+1)$ (except for $u(0)$),
    \item X or Y Pauli matrices at some spin-down sites with indices smaller than $d(k+1)$ (except for $d(0)$),
    \item Z Pauli matrices at some other sites. 
\end{itemize}
Therefore, $Q_k$ anticommutes with one of the strings $\overrightarrow{P_{u(k+1)}\overline{P_{u(k+2)}}}$ or $\overrightarrow{\overline{P_{u(k+1)}}P_{u(k+2)}}$. At the same time, $Q_k$ anticommutes with $Z_{u(0)}\cdot Z_{u(k+1)}\cdot \overrightarrow{P_{d(k+1)}\overline{P_{d(k+2)}}}$. A similar argument also holds for the case $(i_k,j_k)=(d(k),d(k+1))$.

The resulting string $Q_{N-1}$ obtained by the above procedure involves the Pauli matrices X or Y only at the points $(i_k,j_k)$ where the path transitions occur between the upper and lower lanes. Moreover, $X_{u(1)}$ or $Y_{u(1)}$ ($X_{d(1)}$ or $Y_{d(1)}$) emerges only when the path starts from the upper (lower) lane. Therefore, for each path, we can assign a distinct Pauli string in the Hamiltonian algebra $\mathfrak{g}(H_{1,\text{JW}})$. We have a degree of freedom of choosing one of the two strings at step 1, and for each choice, there are $2^{N-2}$ possible paths on the ladder. Therefore, by enumerating this kind of strings, we obtain the inequality
\begin{align}
    &\text{dim}(\mathfrak{g}(H_{1,\text{JW}}))\nonumber\\
    &\ge 2\cdot(\text{\# of the paths on the ladder})\nonumber\\
    &=2^{N-1}.
\end{align}
This proves Theorem~\ref{thm:hamalgdim}~(2).

\begin{figure}[t]
\includegraphics[scale=0.5]{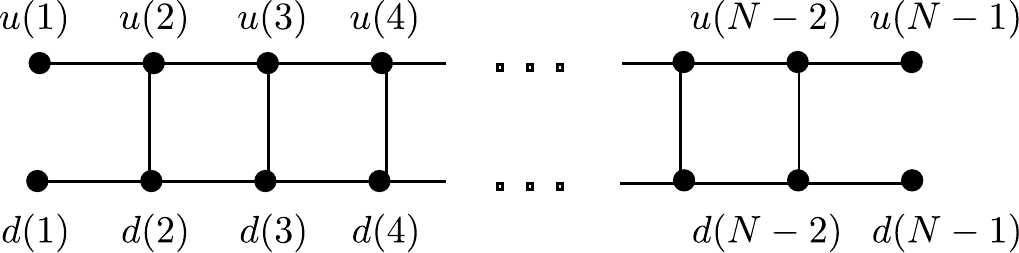}
\caption{\label{fig:ladderdiagram1} 
Ladder graph. The length of the graph is equal to $N-1$.}
\end{figure}

\begin{figure}[t]
\includegraphics[scale=0.6]{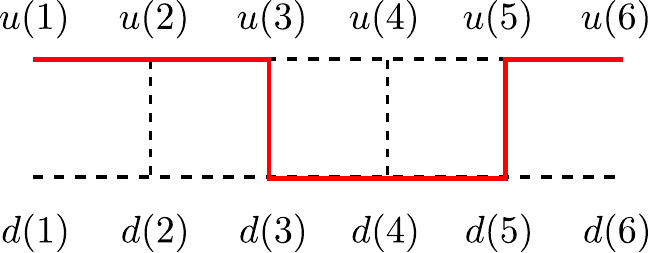}
\caption{\label{fig:contour_ladder1} 
Example of a path of the ladder diagram. In the case of this figure, we represent the path as $[(u(1),u(2))(u(2),u(3))(d(3),d(4))(d(4),d(5))(u(5),u(6))]$}
\end{figure}

\begin{figure}[t]
\includegraphics[scale=0.5]{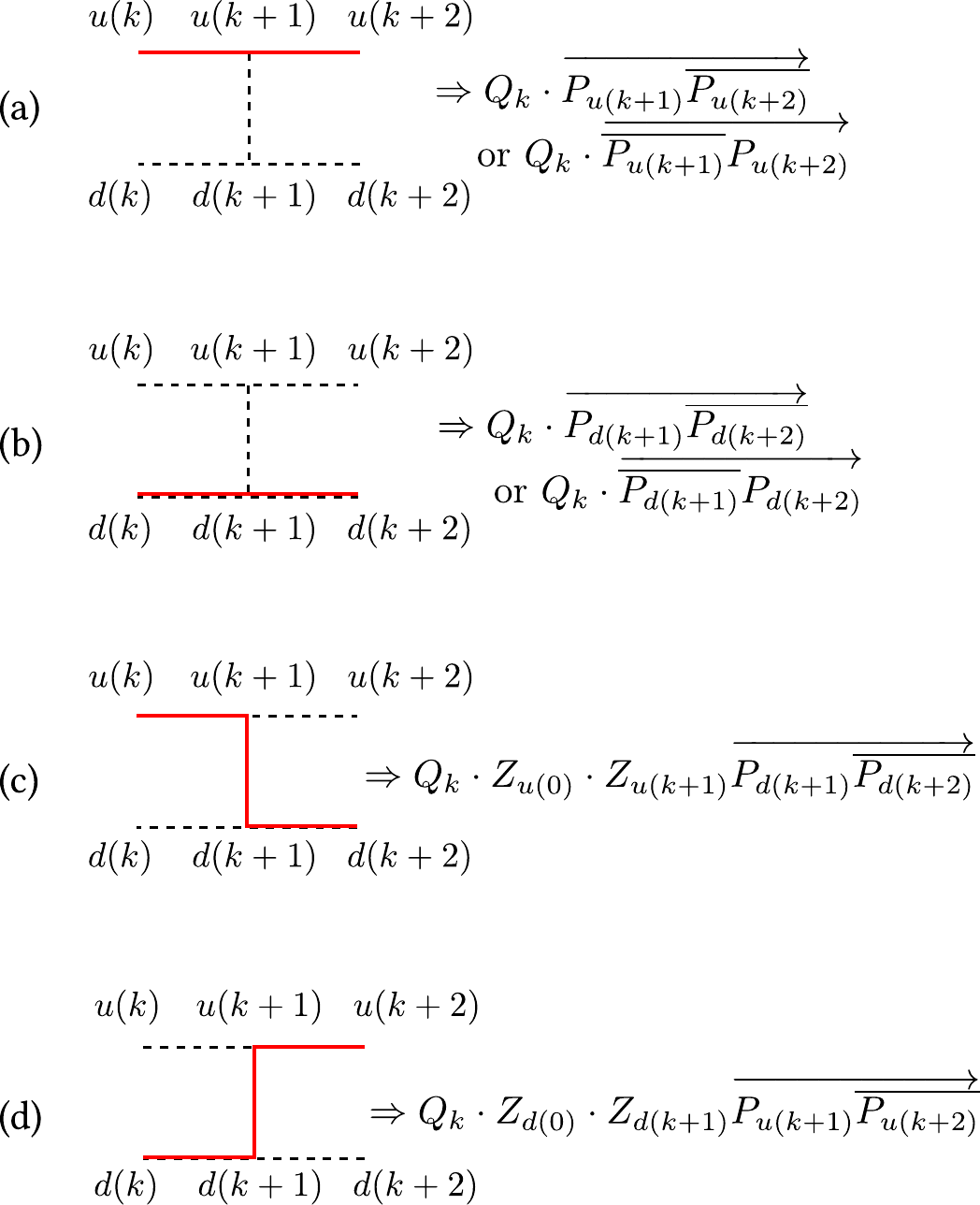}
\caption{\label{fig:operation_all} Schematic representation of the procedure of constructing the Pauli strings $Q$ from the paths.}
\end{figure}

\section{\label{sec:algebraicfastforwarding}Applying to Cartan-based fast-forwarding method}
Theorem~\ref{thm:hamalgdim} is an important result for discussing the scaling behavior of the dimension of the Hamiltonian algebra. However, we cannot apply this theorem directly to the discussion of the Cartan-based fast-forwarding method, one of the efficient implementation methods of time evolution operators. The detail of this method is provided in Ref.~\cite{kokcu2022cartanfastforward}, but we summarize the procedure of this method in the following for clarity. When the Hamiltonian algebra $\mathfrak{g}(H)$ is given, the procedure of the Cartan-based fast-forwarding method is summarized as follows:
\begin{enumerate}
    \item We decompose the Hamiltonian algebra $\mathfrak{g}(H)$ into two subspaces $\mathfrak{k}$ and $\mathfrak{m}$:
    \begin{align}
        \mathfrak{g}(H)=\mathfrak{k}\oplus\mathfrak{m}, 
    \end{align}
    where these subspaces $\mathfrak{k}$ and $\mathfrak{m}$ must satisfy the following conditions;
    \begin{align}
        \label{eq:kkink}[\mathfrak{k},\mathfrak{k}]&\subset\mathfrak{k},\\
        \label{eq:mmink}[\mathfrak{m},\mathfrak{m}]&\subset\mathfrak{k},\\
        \label{eq:mkinm}[\mathfrak{m},\mathfrak{k}]&\subset\mathfrak{m}.
    \end{align}
    In addition, $\mathfrak{k}$ and $\mathfrak{m}$ need to be constructed so that all of the Pauli strings appearing in the terms of the Hamiltonian $H$ belong to $\mathfrak{m}$. That is, if the Hamiltonian takes the form of $H=\sum_{i}\lambda_iP_i$ ($\lambda_i\in\mathbb{R}$), the subspace $\mathfrak{m}$ needs to satisfy
    \begin{align}
        \mathrm{Span}_{\mathbb{R}}(\{P_i\}_{i})\subset\mathfrak{m}.
    \end{align}
    The above decomposition can be realized using an involution $\theta:\mathfrak{g}(H)\to\mathfrak{g}(H)$. The involution $\theta$ is a Lie algebra homomorphism which satisfies $\theta^2=1$. Then, by using the involution $\theta$, the subspace $\mathfrak{k}$ ($\mathfrak{m}$) can be defined as the eigenspace of $\theta$ with eigenvalue $+1$ ($-1$).
    \item We generate the maximal abelian subalgebra of $\mathfrak{m}$ called the Cartan subalgebra. We denote it as $\mathfrak{h}$. The schematic picture of the relations among $\mathfrak{m}$, $\mathfrak{k}$, and $\mathfrak{h}$ is shown in Fig.~\ref{fig:cartanalgebra}.
    \item According to the "KHK" theorem~\cite{helgason2001differential,Earp2005CartanDecompositionSu2N,kokcu2022cartanfastforward},  for any $m\in\mathfrak{m}$, there exist $h \in \mathfrak{h}$ and $K=\prod_{j}e^{i\theta_jk_j}$ ($\{ k_j \}$ is the Pauli-string basis of $\mathfrak{k}$ and $\theta_j\in\mathbb{R}$) such that $m = KhK^\dag$. We apply this theorem to the qubit Hamiltonian $H\in\mathfrak{m}$ 
    to construct the time evolution operator,
    \begin{align}
        e^{iHt}=Ke^{ith}K^{\dagger}.
    \end{align}
\end{enumerate}
According to the above process, the circuit depth of the Cartan-based fast-forwarding method is determined by the dimension of $\mathfrak{k}$ and $\mathfrak{h}$. Hence, in general, there is a possibility that, although the dimension of the entire Hamiltonian algebra exhibits an exponential scaling, the dimension of the subspace $\mathfrak{k}\oplus\mathfrak{h}$ scales polynomially and we could implement the Cartan-based fast-forwarding method efficiently. However, as we show below, this is not the case for the Hamiltonian algebra $\mathfrak{g}(H_{1,\text{JW}})$. 

\begin{thm}\label{thm:alg_m_exponential}
The subalgebra $\mathfrak{k}$ satisfies the following inequality:
\begin{align}\label{eq:dim_cartan_k_ineq}
    \text{dim}(\mathfrak{k})\ge 2^{N-3}.
\end{align}
\end{thm}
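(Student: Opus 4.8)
The plan is to exploit the $\mathbb{Z}_2$-grading encoded in the Cartan relations \eqref{eq:kkink}--\eqref{eq:mkinm} together with the explicit family of $2^{N-1}$ Pauli strings constructed in the proof of Theorem~\ref{thm:hamalgdim}~(2). Assigning grade $0$ to $\mathfrak{k}$ and grade $1$ to $\mathfrak{m}$, the three relations say exactly that the commutator is additive modulo $2$, i.e. $\theta|_{\mathfrak{m}}=-1$ for the defining involution. Since every term of $H_{1,\text{JW}}$ lies in $\mathfrak{m}$, each generator $g$ obeys $\theta(g)=-g$, and because $\theta$ is a Lie-algebra homomorphism, a nested commutator $Q$ of $r$ generators satisfies $\theta(Q)=(-1)^r Q$. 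Each path string $Q_{N-1}$ is precisely such a nested commutator, so it is a $\theta$-eigenvector: it lies \emph{entirely} in $\mathfrak{k}$ when assembled from an even number of generators and entirely in $\mathfrak{m}$ otherwise. The grade is the $\theta$-eigenvalue and thus does not depend on the admissible involution chosen, so I may compute it with the convenient negative-transpose involution $\theta(g)=-g^{T}$, which is admissible because every term of $H_{1,\text{JW}}$ has an even number of $Y$'s and hence lies in $\mathfrak{m}$; under this choice the grade equals the parity of the number of $Y$ operators. Consequently $Q_{N-1}\in\mathfrak{k}$ if and only if it contains an odd number of $Y$'s.

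With this homogeneity established, I would reduce the theorem to counting. The $2^{N-1}$ strings $Q_{N-1}$ are mutually distinct Pauli strings, hence linearly independent (they are orthonormal in the trace inner product of Sec.~\ref{sec:Hamiltonian Algebra}), and each lies wholly in $\mathfrak{k}$ or in $\mathfrak{m}$. Let $a$ denote the number lying in $\mathfrak{k}$. If $a\ge 2^{N-3}$ then $\text{dim}(\mathfrak{k})\ge a\ge 2^{N-3}$ and we are finished. It therefore remains to treat the case $a<2^{N-3}$, in which at least $2^{N-1}-2^{N-3}=3\cdot 2^{N-3}$ of the path strings lie in $\mathfrak{m}$. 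Here I would use $[\mathfrak{m},\mathfrak{m}]\subset\mathfrak{k}$: fixing a reference string $Q_0$ among these $\mathfrak{m}$-strings, every string $R$ that anticommutes with $Q_0$ yields $[Q_0,R]\propto Q_0 R\in\mathfrak{k}$, and since $R\mapsto Q_0 R$ is injective these commutators are pairwise distinct Pauli strings and hence linearly independent. Producing $\ge 2^{N-3}$ anticommuting partners then gives $\text{dim}(\mathfrak{k})\ge 2^{N-3}$ again.

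The main obstacle is exactly this final anticommutation count: I must control the (anti)commutation pattern of the $\mathfrak{m}$-path strings, where the ladder geometry re-enters. The natural approach is to note, by the grading of the first paragraph, that the $Y$-content of $Q_{N-1}$ is localized at the starting site and at the rung crossings of the path, and then to pair paths differing by a single plaquette flip near a fixed rung, checking that the corresponding strings anticommute so as to furnish a reference $Q_0$ with at least $2^{N-3}$ distinct commutator images (equivalently, that the pairwise commutators span at least $2^{N-3}$ dimensions). Verifying that this pairing is well defined and injective across the whole family, uniformly in the graph $G$ and in the placement of $u(0),d(0)$, is the delicate combinatorial core; the grading argument is what reduces the problem to this finite bookkeeping.
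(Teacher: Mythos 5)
Your first paragraph is sound and in fact sharpens what the paper only states implicitly: each path string $Q_{N-1}$ is a homogeneous element for any admissible $\mathbb{Z}_2$-grading, and since $\theta(g)=-g^{T}$ is itself admissible for $H_{1,\text{JW}}$ (every generator carries an even number of $Y$'s), the grade of any nonzero nested commutator is forced to equal its $Y$-parity. The gap is in everything after that. Your dichotomy reduces the theorem, in the unfavourable case $a<2^{N-3}$, to exhibiting a single reference $Q_0\in\mathfrak{m}$ with at least $2^{N-3}$ anticommuting partners among the other $\mathfrak{m}$-path strings (or to showing the pairwise commutators span $2^{N-3}$ dimensions), and you explicitly leave this ``delicate combinatorial core'' unverified. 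But that step \emph{is} the theorem in that case: nothing in the grading argument rules out, a priori, that the $\mathfrak{m}$-path strings largely commute with one another, and two arbitrary ladder paths can agree and disagree on arbitrary sets of rungs, so the overlap pattern is much harder to control than the single-plaquette-flip heuristic suggests. As written, the proof stops exactly where the bound $2^{N-3}$ has to be produced.

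The paper closes this differently and more cheaply: instead of pairing path strings against each other, it manufactures two fixed short strings $S_X^u,S_Y^u\in\mathfrak{m}$ supported near $u(0)$, $u(1)$, $d(0)$ --- namely $\overrightarrow{X_{u(0)}P_{u(1)}}$ and $\overrightarrow{Y_{u(0)}\overline{P_{u(1)}}}$ themselves, or their commutators with $Z_{u(0)}Z_{d(0)}\in\mathfrak{m}$, according to the parity of the nested commutators that built them, so that they land in $\mathfrak{m}$ in either case. Their product is $\propto Z_{u(0)}Z_{u(1)}$, and every $\mathfrak{m}$-path string carries $X$ or $Y$ at $u(1)$ but only $I$ or $Z$ at $u(0)$, hence anticommutes with exactly one of $S_X^u$, $S_Y^u$; the nonvanishing commutator lies in $\mathfrak{k}$ by Eq.~\eqref{eq:mmink} and modifies the string only near the first rung, so the images remain pairwise distinct, being distinguished by their $X/Y$ content at $(u(i),d(i))$ for $i=2,\dots,N-2$. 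If you want to keep your structure, replace the path-versus-path anticommutation step with this fixed-reference device; the drop from $2^{N-1}$ to $2^{N-3}$ in the exponent is precisely the price of forgetting what happens at the first rung.
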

\begin{proof}
First, according to the proof of Lemma~\ref{lem:ladderparts}, $\overrightarrow{X_{u(0)}P_{u(1)}}$ and $\overrightarrow{Y_{u(0)}\overline{P_{u(1)}}}$ are constructed by iteratively taking the commutators, the same number of times, of the Pauli strings appearing in the Hamiltonian, which belong to the subspace $\mathfrak{m}$. Hence, depending on the parity of the number of iterations, one of the following statements is true.
\begin{enumerate}
    \item Both $\overrightarrow{X_{u(0)}P_{u(1)}}$ and $\overrightarrow{Y_{u(0)}\overline{P_{u(1)}}}$ belong to $\mathfrak{m}$.\\
    \item Both $\overrightarrow{X_{u(0)}P_{u(1)}}$ and $\overrightarrow{Y_{u(0)}\overline{P_{u(1)}}}$ belong to $\mathfrak{k}$. 
\end{enumerate}    
In the case 2., because of  $Z_{u(0)}Z_{d(0)}\in\mathfrak{m}$ and Eq.~\eqref{eq:mkinm}, both $\overrightarrow{X_{u(0)}P_{u(1)}}\cdot Z_{u(0)}Z_{d(0)}$ and $\overrightarrow{Y_{u(0)}\overline{P_{u(1)}}}\cdot Z_{u(0)}Z_{d(0)}$ belong to $\mathfrak{m}$. A similar argument holds for $\overrightarrow{X_{d(0)}P_{d(1)}}$ and $\overrightarrow{Y_{d(0)}\overline{P_{d(1)}}}$. Hence, in the case of 1 (2), we define 
\begin{align}
    S_{X}^{u}=\overrightarrow{X_{u(0)}P_{u(1)}}\quad(\text{or } \overrightarrow{X_{u(0)}P_{u(1)}}\cdot Z_{u(0)}Z_{d(0)}),\\
    S_{Y}^{u}=\overrightarrow{Y_{u(0)}\overline{P_{u(1)}}}\quad(\text{or }\overrightarrow{Y_{u(0)}\overline{P_{u(1)}}}\cdot Z_{u(0)}Z_{d(0)}),
\end{align}
which satisfy $S_{X}^{u},S_{Y}^{u}\in\mathfrak{m}$. We also define $S_{X,Y}^d\in\mathfrak{m}$ in a similar manner. 

Then, we prepare the Pauli string $Q_{N-1}$ in $\mathfrak{g}(H)$ obtained by the procedure written in Sec.~\ref{subsec:proof interact}. By construction, $Q_{N-1}$ belongs to either $\mathfrak{k}$ or $\mathfrak{m}$. For simplicity, we suppose that the path corresponding to $Q_{N-1}$ starts from $(i_1,j_1)=(u(0),u(1))$. (When $(i_1,j_1)=(d(0),d(1))$, we replace $S_{X,Y}^u$ in the following discussion with $S_{X,Y}^d$.) If the string $Q_{N-1}$ belongs to $\mathfrak{k}$, the process is over. Then, we consider the case $Q_{N-1}\in\mathfrak{m}$. The product of $S_X^u$ and $S_Y^u$ is $S_X^uS_Y^u\propto Z_{u(0)}Z_{u(1)}$. In addition, the possible form of $Q_{N-1}$ is either
\begin{align}
    Q=X_{u(1)}W_{u(0),d(0)}\cdot (\cdots)
    \text{ or }Y_{u(1)}W_{u(0),d(0)}\cdot (\cdots)
\end{align}
where $W_{u(0),d(0)}=I$, $Z_{u(0)}$, $Z_{d(0)}$ or $Z_{u(0)}Z_{d(0)}$ and $(\cdots)$ consists of the Pauli matrices at the sites except for $u(1)$, $u(0)$, or $d(0)$. Hence, we can see that
\begin{align}
\{Q,S_X^uS_Y^u\}=0.
\end{align}
This shows that one of the Pauli string $S_X^u$ or $S_Y^u$ anticommutes with $Q_{N-1}$. Using $Q_{N-1}$, $S_X^u$, $S_Y^u\in\mathfrak{m}$ and Eq.~\eqref{eq:mmink}, the non-vanishing one of the commutators $[Q_{N-1}, S_X^u]$ or $[Q_{N-1}, S_Y^u]$ belongs to $\mathfrak{k}$.

Using the above procedure, we can construct at least $2^{N-3}$ elements of $\mathfrak{k}$, which are distinguished from each other by the positions $(u(i),d(i))$ ($i=2,3,\cdots,N-2$) of $X$ and $Y$ Pauli matrices. 
\end{proof}
Using Theorem~\ref{thm:alg_m_exponential}, we can say that
\begin{align}
    \text{dim}(\mathfrak{k}\oplus\mathfrak{h})\ge \text{dim}(\mathfrak{k})\ge 2^{N-3}.
\end{align}
In conclusion, the circuit depth for the Cartan-based fast-forwarding method of interacting fermion model scales exponentially with the number of sites.

\begin{figure}[t]
\includegraphics[scale=0.5]{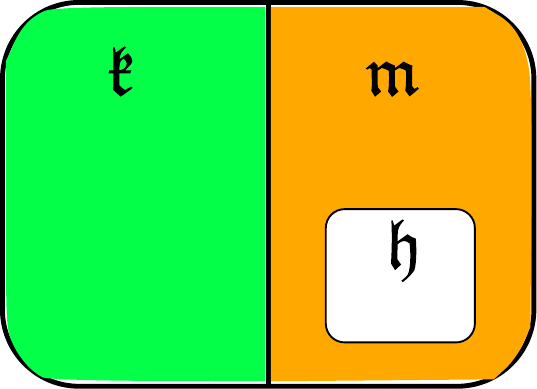}
\caption{\label{fig:cartanalgebra} The relation of the subalgebra $\mathfrak{k}$, $\mathfrak{m}$, and the Cartan subalgebra $\mathfrak{h}$ in $\mathfrak{g}(H)$.}
\end{figure}

\section{Examples}\label{sec:application}
We apply Theorem~\ref{thm:hamalgdim} and Theorem~\ref{thm:alg_m_exponential} to the Hamiltonian algebra of some prototypical fermion models. In the previous study, the dimensions of the Hamiltonian algebras of the following fermion models are empirically shown to scale exponentially with the number of qubits ~\cite{Steckmann2023AIMHubbard}. In this paper, we prove that this holds rigorously.

\subsection{Anderson model}\label{subsec:andersonmodel}
First, we consider the Anderson impurity model (AIM). This model describes metals with a magnetic impurity. We assume that, for simplicity, there are no hopping terms between the itinerant electron sites. The Hamiltonian of the AIM is given by
\begin{align}\label{eq:aim}
    H_{\text{AIM}}=&\sum_{\sigma}\sum_{i=1}^{N-1}V_{i}(c^{\dagger}_{i\sigma}c_{0\sigma}+\text{h.c.})\nonumber\\
    &+U\left(n_{0\uparrow}-\frac12\right)\left(n_{0\downarrow}-\frac12\right).
\end{align}
In AIM, site $0$ corresponds to the impurity site and the other sites correspond to the itinerant electron sites. The interaction between spin-up and spin-down electrons occurs only at the impurity site. The corresponding graph representing hopping terms is shown in Fig.~\ref{fig:graph_anderson} and is a connected graph. By performing the Jordan-Wigner transformation, the AIM is transformed to the qubit Hamiltonian
\begin{align}\label{eq:aim_jw}
    H_{\text{AIM,JW}}=&\sum_{i=1}^{N-1}\frac{V_{i}}{2}(\overrightarrow{X_0X_i}+\overrightarrow{Y_0Y_i})\nonumber\\
    &+\sum_{i=1}^{N-1}\frac{V_{i}}{2}(\overrightarrow{X_NX_{N+i}}+\overrightarrow{Y_NY_{N+i}})\nonumber\\
    &+\frac{U}{4}Z_0Z_N,
\end{align}
where the indices $0$ to $N-1$ correspond to the spin-up sites and $N$ to $2N-1$ correspond to the spin-down sites. According to Theorem.~\ref{thm:hamalgdim}~(2), the Hamiltonian algebra of the AIM satisfies
\begin{align}
    \text{dim}(\mathfrak{g}(H_{\text{AIM,JW}}))\ge 2^{N-1}.
\end{align}
We may adopt other ways of assigning qubit indices, such as one in which even numbers $\{0,2,4,\cdots,2N-2\}$ correspond to the spin-up sites and odd numbers $\{1,3,5,\cdots,2N-1\}$ correspond to the spin-down sites. However, regardless of the choice of indexing, Theorem.~\ref{thm:hamalgdim}~(2) guarantees the exponential dimensionality of the Hamiltonian algebra of the AIM. In addition, we may consider additional chemical potential or other terms. These terms introduce additional Pauli strings to the Hamiltonian. Hence the dimension of the resulting Hamiltonian algebra is larger than that of $H_{\text{AIM,JW}}$ by Prop.~\ref{prop:generatedliealg}. Therefore, the Hamiltonian algebra of the AIM with additional terms still exhibits an exponential scaling of the dimension. Similarly, the circuit depth of the Cartan-based fast-forwarding method scales exponentially with the number of sites according to Theorem~\ref{thm:alg_m_exponential}.

\begin{figure}[t]
\includegraphics[scale=0.5]{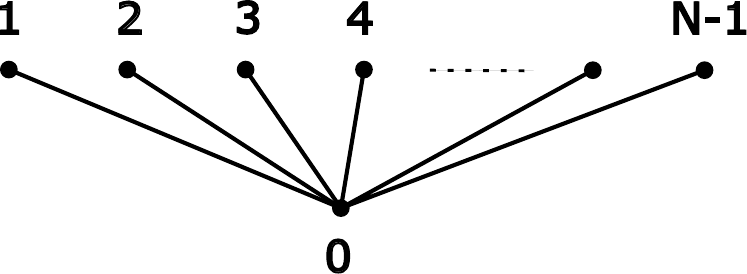}
\caption{\label{fig:graph_anderson} Graph representing hopping terms appearing in the Anderson impurity model. This graph is connected, since all two vertices are connected via the site $0$.}
\end{figure}

\subsection{Hubbard model}\label{subsec:hubbardmodel}
Next, we consider the Hubbard model:
\begin{align}\label{eq:hubbard}
    H_{\text{Hubbard}}=&-t\sum_{\sigma}\sum_{(ij)\in G}(c^{\dagger}_{i\sigma}c_{j\sigma}+\text{h.c.})\nonumber\\&+U\sum_{i}\left(n_{i\uparrow}-\frac12\right)\left(n_{i\downarrow}-\frac12\right),
\end{align}
where $G$ denotes a general connected graph, such as a one-dimensional chain or two-dimensional square lattice. By performing the Jordan-Wigner transformation, Eq.~\eqref{eq:hubbard} becomes
\begin{align}\label{eq:hubbard_jw}
    H_{\text{Hubbard}}=
    &-\frac{t}{2}\sum_{\left<ij\right>\in G}(\overrightarrow{X_iX_j}+\overrightarrow{Y_iY_j})\nonumber\\
    &-\frac{t}{2}\sum_{\left<ij\right>\in G}(\overrightarrow{X_{N+i}X_{N+j}}+\overrightarrow{Y_{N+i}Y_{N+j}})\nonumber\\
    &+4U\sum_{i}Z_{i}Z_{i+N}.
\end{align}
In this representation, we assign the spin-up sites to the zeroth to $(N-1)$-th qubits, and the spin-down sites to the other qubits. In order to apply Theorem~\ref{thm:hamalgdim} to the Hubbard model, we consider the following Jordan-Wigner transformed fermion model with a single interaction term,
\begin{align}
    H_{\text{Hubbard,JW,1}}=&-\frac{t}{2}\sum_{\left<ij\right>\in G}(\overrightarrow{X_iX_j}+\overrightarrow{Y_iY_j})\nonumber\\
    &-\frac{t}{2}\sum_{\left<ij\right>\in G}(\overrightarrow{X_{N+i}X_{N+j}}+\overrightarrow{Y_{N+i}Y_{N+j}})\nonumber\\
    &+4UZ_{0}Z_{N}.
\end{align}
According to Prop.~\ref{prop:generatedliealg}, 
the Hamiltonian algebras of these models satisfy 
\begin{align}
    \text{dim}(\mathfrak{g}(H_{\text{Hubbard,JW}}))\ge\text{dim}(\mathfrak{g}(H_{\text{Hubbard,JW,1}})).
\end{align}
On the other hand, by using Theorem~\ref{thm:hamalgdim}, inequality $\text{dim}(\mathfrak{g}(H_{\text{Hubbard,JW,1}}))\ge 2^{N-1}$ holds, and hence $\text{dim}(\mathfrak{g}(H_{\text{Hubbard,JW}}))\ge 2^{N-1}$ holds. Therefore, the dimension of the Hamiltonian algebra of the Hubbard model scales exponentially, and so does the circuit depth for the Cartan-based fast-forwarding method.

\section{Exceptions}\label{sec:exceptions}
Although we have considered the property of the Hamiltonian algebra of the wide class of fermion models, we cannot apply Theorem~\ref{thm:hamalgdim} to several classes of fermion models. First, in quantum chemistry, we often consider fermion models such as
\begin{align}
    H=\sum_{\sigma}\sum_i\epsilon_ic^{\dagger}_{i\sigma}c_{i\sigma}+\sum_{\sigma,\sigma'}\sum_{pqrs}V_{pqrs}c^{\dagger}_{p\sigma}c^{\dagger}_{q\sigma'}c_{r\sigma'}c_{s\sigma},
\end{align}
where $c_{i\sigma}$ and $c^{\dagger}_{i\sigma}$ are the annihilation and creation operators of a spin-orbital $(i\sigma)$. In this case, the graph representing the quadratic hopping terms does not need to be connected. Hence, although there is a possibility that the fourth-order interaction may push the dimension of the Hamiltonian algebra to the exponential order of the number of sites in this model, we cannot discuss this possibility by utilizing Theorem~\ref{thm:hamalgdim}.

Next, the construction of the exponential number of strings in $\mathfrak{g}(H_{1,\text{JW}})$ owes to Lemma~\ref{lem:ladderparts}. If one of the strings $\overrightarrow{X_{i}X_{j}}$ and $\overrightarrow{Y_{i}Y_{j}}$ is missing in the Jordan-Wigner transformed qubit model, the Hamiltonian becomes out of the range of Theorem~\ref{thm:hamalgdim}. This case occurs, for example, when the Hamiltonian takes the following form:
\begin{align}
    H=&\sum_{\sigma}\sum_{\left<ij\right>\in G,i<j}(c^{\dagger}_{i\sigma}-c_{i\sigma})(c^{\dagger}_{j\sigma}+c_{j\sigma})\nonumber\\ &+U\left(n_{0\uparrow}-\frac12\right)\left(n_{0\downarrow}-\frac12\right).
\end{align}
In particular, if the graph representing hopping terms is a one-dimensional chain, the Jordan-Wigner transformed Hamiltonian becomes
\begin{align}\label{eq:onlyx}
    H_{X}=&\sum_{\sigma}\sum_{i=0}^{N-2}(X_{i}X_{i+1}+X_{N+i}X_{N+i+1})+4UZ_0Z_N.
\end{align}
Because the Pauli strings $X_{1}X_{2}$, $X_{2}X_{3}$, $\cdots$, $X_{N-2}X_{N-1}$ and $X_{N+1}X_{N+2}$, $X_{N+2}X_{N+3}$, $\cdots$, $X_{2N-2}X_{2N-1}$ commute with all of the terms in the Hamiltonian, the Hamiltonian algebra of this model is 
\begin{align}
    \mathfrak{g}(H_X)=&\text{Span}\left(\right.\nonumber\\
    &\{X_0X_1,X_NX_{N+1},Z_0Z_N,Y_0X_1Z_N,Z_0Y_NX_{N+1}\}\nonumber\\
    &\cup\{X_{i}X_{i+1}|i=1,2,\cdots,N-2\}\nonumber\\
    &\left.\cup\{X_{N+i}X_{N+i+1}|i=1,2,\cdots,N-2\}\right).
\end{align}
Hence the dimension of the Hamiltonian in Eq.~\eqref{eq:onlyx} is $\text{dim}(\mathfrak{g}(H_{X}))=2N+1$ and is polynomial order of the number of sites, despite the presence of interaction term between spin-up and spin-down sites. 

Finally, since our argument for constructing the basis of the Hamiltonian algebra is based on the spin degree of freedom, we cannot apply Theorem~\ref{thm:hamalgdim} to spinless fermion models.

\section{Conclusion and Discussion}\label{sec:conclusionanddiscussion}
We investigate the Hamiltonian algebra of Jordan-Wigner transformed fermion models. First, we consider the Jordan-Wigner-transformed free fermion model. The Hamiltonian algebra of this model exhibits a quadratic-order scaling of the number of sites. Then we consider the fermion model with single-site Coulomb interaction. We show that the dimension of the Hamiltonian algebra of this model exhibits an exponential scaling of the number of sites. By applying this theorem, we show that the dimension of the Hamiltonian algebra of the AIM and the Hubbard model exhibits an exponential scaling. This result implies the hardness of the Cartan-based fast-forwarding method efficiently with these models due to the exponential scaling of circuit depth.

This paper evaluates the inequalities for the dimension of the Hamiltonian algebra, focusing on its order, which is crucial for quantum algorithm implementation. The obtained bounds are not tight, but they are sufficient for our purposes. 
Specifically, the inequalities Eq.~\eqref{eq:dim_ineq_hamilang_interact} and Eq.~\eqref{eq:dim_cartan_k_ineq} are not tight evaluations. Thus, future work will focus on obtaining tighter bounds and elucidating the explicit structure of the Hamiltonian algebras for specific fermion models with interactions.

In our study, for qubit models obtained by the Jordan-Wigner transformation of certain interacting fermion models including AIM and the Hubbard model, we found that the circuit depth of the Cartan-based fast-forwarding method scales exponentially with the number of qubits. However, the fast-forwardability of these models is not completely prohibited. Thus, the following two things are worth discussing. 

First, a choice of fermion-qubit mappings may alter the algebraic structure of the Hamiltonian algebra. For example, we can consider the interacting fermion model without the hopping terms involving the interacting sites, such as
\begin{align}
	H=-t\sum_{\sigma}\sum_{i\ge 2}(c^{\dagger}_{i\sigma}c_{i+1\sigma}+\text{h.c.})+Un_{1\uparrow}n_{1\downarrow}.
\end{align}
The dimension of its Hamiltonian algebra scales quadratically with the number of sites when we perform the Jordan-Wigner transformation defined in Eqs.~\eqref{eq:jw_fermion_def_dag} and ~\eqref{eq:jw_fermion_def}. On the other hand, when we perform the following fermion-qubit mapping,
\begin{align}\label{eq:orbitalrotation12}
	c_{1\sigma}=\frac{1}{\sqrt2}\left(Z_1\cdots Z_{\sigma(1)-1}\frac{X_{\sigma(1)}-iY_{\sigma(1)}}{2}\right.\nonumber\\\left.+Z_1\cdots Z_{\sigma(2)-1}\frac{X_{\sigma(2)}-iY_{\sigma(2)}}{2}\right),\\
	c_{2\sigma}=\frac{1}{\sqrt2}\left(Z_1\cdots Z_{\sigma(1)-1}\frac{X_{\sigma(1)}-iY_{\sigma(1)}}{2}\right.\nonumber\\\left.-Z_1\cdots Z_{\sigma(2)-1}\frac{X_{\sigma(2)}-iY_{\sigma(2)}}{2}\right),\\
	c_{k\sigma}=Z_1\cdots Z_{\sigma(k)-1}\frac{X_{\sigma(k)}-iY_{\sigma(k)}}{2}\quad (k\ge3),
\end{align}
where $\sigma(k)=u(k)$ or $d(k)$ depending on the spin $\sigma=\uparrow, \downarrow$, (which is the combination of the orbital rotation between the sites $1$ and 2 and the Jordan-Wigner transformation,) the dimension of the resulting Hamiltonian algebra scales exponentially. Conversely, it is not prohibited that the prototypical interacting fermion models can be transformed so that the dimension of the resulting Hamiltonian algebras scales polynomially. It is challenging to prove the impossibility of the Cartan-based fast-forwarding method for interacting fermion models when using any other fermion-qubit mapping. 

Second, it should be noted that the computational cost of the fast-forwarding method under a finite error tolerance is beyond the scope of this study. When the computational accuracy is fixed to a certain finite value, there may be a possibility of efficiently performing the fast-forwarding method for interacting fermion models. Moreover, practical and useful computations might be achievable using methods such as the variational fast-forwarding method motivated by the Cartan-based fast-forwarding method. Investigating the fast-forwardability of interacting fermion models in terms of choices of fermion-qubit mappings and approximations of the Cartan-based fast-forwarding method remains a task for future work.

\begin{acknowledgments}
We are grateful to Yuya O. Nakagawa, Keita Kanno, Masaya Kohda, Hokuto Iwakiri, and Naohisa Sueishi for valuable discussion and advice.
\end{acknowledgments}

\appendix

\section{\label{app:correction}Modification of proof}
In Subsec.~\ref{subsec:proof interact} in the main text, we discuss the properties of the Hamiltonian algebra of Jordan-Wigner transformed interacting fermion model in Eq.~\eqref{eq:dim_ineq_hamilang_interact} restricting to the case $u(0)<u(i)$ and $d(0)<d(i)$ for any value $i=1,2,\cdots, N-1$. Then, we generalize the model to arbitrary choices of $u(0)$ and $d(0)$. However, the proof in the main text cannot be applied to the general cases as they are and we need to modify the discussion slightly.

In general cases, the proof of Lemma~\ref{lem:ladderparts} is almost the same as the special case except for replacing $u(0)$ ($d(0)$) with $u_{\text{min}}$ ($d_{\text{min}}$), where $u_{\text{min}}$ ($d_{\text{min}}$) is the minimum of the spin-up (spin-down) sites including the interacting site $u(0)$ ($d(0)$). If $u_{\text{min}}\neq u(0)$, by taking the commutation relations of $\overrightarrow{X_{u_{\text{min}}}P_{u(0)}}$, $\overrightarrow{Y_{u_{\text{min}}}\overline{P_{u(0)}}}$, $\overrightarrow{X_{u_{\text{min}}}P_{u(i)}}$, and $\overrightarrow{X_{u_{\text{min}}}\overline{P_{u(i)}}}$, we obtain 
\begin{align}
    \overrightarrow{\overline{P_{u(0)}}P_{u(i)}},\quad \overrightarrow{P_{u(0)}\overline{P_{u(i)}}}\in\mathfrak{g}(H_{1,\text{JW}}).
\end{align}
When $\overline{P_{u(0)}}=Y_{u(0)}$, we swap the definition of $P_{u(i)}$ between $X_{u(i)}$ and $Y_{u(i)}$ for any $i=1,2,\cdots,N-1$. A similar discussion can be applied to the spin-down case.

The latter part of Lemma~\ref{lem:ladderparts} also holds for general cases. However, we have to take the commutators of different strings to obtain $\overrightarrow{P_{u(i)}\overline{P_{u(i+1)}}}$ and $\overrightarrow{\overline{P_{u(i)}}P_{u(i+1)}}$ depending on the order relations among $u(0)$, $u(i)$, and $u(i+1)$. 

Lemma~\ref{lem:rungparts} also needs a modification, because the proof in the main text uses the fact that $u(0)$ ($d(0)$) is the smallest value in $\{u(0),u(1),\cdots,u(N-1))\}$ ($\{d(0),d(1),\cdots,d(N-1))\}$). To generalize the proof of Lemma~\ref{lem:rungparts} to arbitrary $u(0)$ and $d(0)$, we introduce the following notations: 
\begin{align}
    \label{app_eq:Lxup}
    L_{X}^{\uparrow}(i)=\begin{cases}
        \overrightarrow{X_{u(0)}P_{u(i)}}\quad (u(i)>u(0))\\
        \overrightarrow{Y_{u(0)}\overline{P_{u(i)}}}\quad (u(i)<u(0))
    \end{cases},\\
    \label{app_eq:Lyup}
    L_{Y}^{\uparrow}(i)=\begin{cases}
        \overrightarrow{Y_{u(0)}\overline{P_{u(i)}}}\quad (u(i)>u(0))\\
        \overrightarrow{X_{u(0)}P_{u(i)}}\quad (u(i)<u(0))
    \end{cases},\\ 
    \label{app_eq:Lxdown}
    L_{X}^{\downarrow}(i)=\begin{cases}
        \overrightarrow{X_{d(0)}P_{d(i)}}\quad (d(i)>d(0))\\
        \overrightarrow{Y_{d(0)}\overline{P_{d(i)}}}\quad (d(i)<d(0))
    \end{cases},\\
    \label{app_eq:Lydown}
    L_{Y}^{\downarrow}(i)=\begin{cases}
        \overrightarrow{Y_{d(0)}\overline{P_{d(i)}}}\quad (d(i)>d(0))\\
        \overrightarrow{X_{d(0)}P_{d(i)}}\quad (d(i)<d(0))
    \end{cases}.
\end{align}
The important properties of the strings in Eq.~\eqref{app_eq:Lxup} to Eq.~\eqref{app_eq:Lydown} is that only the following ones are the nonvanishing commutators among them:
\begin{gather}
    [L_{X}^{\uparrow}(i),L_{X}^{\uparrow}(j)]\propto \overrightarrow{\overline{P_{u(i)}}P_{u(j)}}\quad (i<j)\\
    [L_{Y}^{\uparrow}(i),L_{Y}^{\uparrow}(j)]\propto \overrightarrow{P_{u(i)}\overline{P_{u(j)}}}\quad (i<j)
\end{gather}
and their counterparts for $L_{X,Y}^{\downarrow}$.  

Instead of $\Sigma$ strings in Eq.~\eqref{eq:Sigmas1} to Eq.~\eqref{eq:Sigmas4}, we define $\Sigma_i(k)$ ($i=1,2,3,4$) as
\begin{align}
    \Sigma_1(i)&=L_{X}^{\uparrow}(i)\cdot L_{X}^{\downarrow}(i)\cdot Z_{u(0)}Z_{d(0)},\\
    \Sigma_2(i)&=L_{Y}^{\uparrow}(i)\cdot L_{X}^{\downarrow}(i)\cdot Z_{u(0)}Z_{d(0)},\\
    \Sigma_3(i)&=L_{X}^{\uparrow}(i)\cdot L_{Y}^{\downarrow}(i)\cdot Z_{u(0)}Z_{d(0)},\\
    \Sigma_4(i)&=L_{Y}^{\uparrow}(i)\cdot L_{Y}^{\downarrow}(i)\cdot Z_{u(0)}Z_{d(0)}.
\end{align}
We can show that these strings belong to the Hamiltonian algebra $\mathfrak{g}(H_{1,\text{JW}})$. These definitions are the extensions of the $\Sigma_{1,2,3,4}$ in the main text. Then, by taking the commutator of these $\Sigma$ strings, as in the main text, we obtain the Pauli strings equivalent to Eq.~\eqref{eq:ZPPPP1} to Eq.~\eqref{eq:ZPPPP4}. Therefore, Lemma~\ref{lem:rungparts} also holds for the general $u(0)$ and $d(0)$.

\bibliography{HamiltonianAlgebraInteractingFermion}

\end{document}